\DeclarePairedDelimiter{\abs}{\lvert}{\rvert}
\numberwithin{equation}{section}
\theoremstyle{plain}
\newtheorem{theo}{Theorem}[section]
\newtheorem{definition}[theo]{Definition}
\newtheorem{proposition}[theo]{Proposition}
\newtheorem{theorem}[theo]{Theorem}
\newtheorem{lemma}[theo]{Lemma}
\theoremstyle{definition}
\newtheorem{remark}[theo]{\textbf{Remark}}
\newcommand\numeq[1]%
 \newcommand\numineq[1]%
      \newcommand{\R}{{\mathbb R}}
\newcommand{\id}{\operatorname{id}}
\begin{document}

\author{}

\title[Optimal non-signalling violations via Tensor Norms]{Optimal non-signalling violations via Tensor Norms}

\author{A. Amr}
\author{C. Palazuelos}
\author{I. Villanueva}

\thanks{}

\begin{abstract}
In this paper we characterize  the set of bipartite non-signalling probability distributions in terms of tensor norms. Using this characterization we give optimal upper and lower bounds on Bell inequality violations when non-signalling distributions are considered. Interestingly, our upper bounds show that non-signalling Bell inequality violations cannot be significantly larger than quantum Bell inequality violations.
\end{abstract}

\maketitle

\section{Introduction}

A very remarkable feature of quantum mechanics is that it predicts the existence of experimental data which cannot be reproduced within any local and realistic physical theory, even in the presence of hidden variables. This idea was first formalized by Bell \cite{Bell} and has played a major role in the recent development of quantum information science (see the survey \cite{BCPSW}). 

One of the main ideas in Bell's work, non-locality, can be studied in itself, independently from  quantum mechanics. Bell's scenario is usually described by  two parties spatially separated, typically named Alice and Bob, who perform different measurements to obtain certain outputs. If we label Alice's and Bob's measurement devices by $x$ and $y$ respectively so that $x,y=1,\cdots , N$ and Alice's and Bob's possible outputs by $a$ and $b$ respectively so that $a,b=1,\cdots ,K$, the main object of study is the tensor $$P=\{P(a,b|x,y)\}_{x,y; a,b=1}^{N,K},$$ where $P(a,b|x,y)$ denotes the probability that Alice and Bob obtain the pair of outputs $(a,b)$ when they measure with the inputs $x$ and $y$ respectively. Note that, from an algebraic point of view, each $P$ is just an element in $\R^{N^2K^2}$. Moreover, the fact that $P$ describes a measurement scenario implies that certain restrictions must be fulfilled; namely,  $P(a,b|x,y)\geq 0$ and $\sum_{a,b}P(a,b|x,y)=1$ for every $a,b,x,y$. Let us denote by $\mathcal{C}$ the subset of $\R^{N^2K^2}$ given by such elements. We  will refer to them as  \emph{probability distributions}.

The main point in Bell's work was to understand that the assumption of a physical theory to explain the experiment (and, more generally, Nature) leads to a subset of $\mathcal{C}$ which will be formed by those probability distributions which are compatible with such a theory. A minimal requirement for a theory to be meaningful is the so called \emph{non-signalling condition}, which means that Alice and Bob's marginals are well defined:
\begin{align}
	\sum_{a}P(a,b|x,y)=\sum_{a}P(a,b|x',y)\text{ for all } x, x', y, b\label{ns1},\\
	\sum_{b}P(a,b|x,y)=\sum_{b}P(a,b|x,y')\text{ for all } y, y', x, a\label{ns2}.
\end{align}

This is physically motivated by the principle of \emph{Einstein locality}, which implies non-signalling if we assume that Alice and Bob are space-like separated. Let us denote the set of non-signalling probability distributions by $\mathcal{NS}\subset \mathcal{C}$. 

Among all subsets of $\mathcal{NS}$ there are two which are particularly relevant. They are the set of probability distributions which are compatible with a classical description and a quantum description of Nature respectively. More precisely,  given a probability distribution $P\in \mathcal{C}$, we will say that $P$ is \emph{Classical} if
\begin{equation*}\label{classical}
P(a,b|x,y)=\int_\Omega P_\omega(a|x)Q_\omega(b|y)d\mathbb{P}(\omega) \hspace{0.4 cm}\text{for every $x,y,a,b$,}
\end{equation*}
where $(\Omega,\Sigma,\mathbb{P})$ is a probability space, $P_\omega(a|x)\ge 0$ and $\sum_a
P_\omega(a|x)=1$ for all $a, x,\omega$; and analogous conditions hold for $Q_\omega(b|y)$. On the other hand, we say that $P$ is \emph{Quantum} if there exist two Hilbert spaces $H_1$, $H_2$ such that
\begin{equation*}\label{quantum}
P(a,b|x,y)=\langle\psi|E_x^a\otimes F_y^b |\psi\rangle\hspace{0.4 cm}\text{for every $x,y,a,b$,}
\end{equation*}
where  $|\psi\rangle\in H_1\otimes H_2$ is a vector of norm one and $(E_x^a)_{x,a}\subset B(H_1)$, $(F_y^b)_{y,b}\subset B(H_2)$ are two sets of operators representing POVM measurements on Alice's and Bob's system respectively. That is, $E_x^a$ is semidefinite positive and $\sum_{a}E_x^a=\id_{H_1}$ for every $a$, $x$; and analogous conditions hold for $(F_y^b)_{y,b}$.  We will denote by  $\mathcal{L}$  and $\mathcal{Q}$ the sets of classical and quantum probability distributions respectively. It is well known \cite{Tsirelson} that $\mathcal{L}\varsubsetneq\mathcal{Q}\varsubsetneq\mathcal{NS}$, being the three of them convex sets.

A natural quantification of how different the sets  $\mathcal L$ , $\mathcal Q$ and  $\mathcal {NS}$  are can be done by means of the so called \emph{Bell inequality violations}. More precisely, if $\mathcal A\in \{\mathcal L, \mathcal Q, \mathcal {NS}\}$ and $M\in \R^{N^2K^2}$ is any tensor, let us denote
\begin{align*}
\omega_{\mathcal A} (M)=\sup_{P\in \mathcal A} |\langle M,P\rangle|,
\end{align*}where the dual action is given by $\langle M,P\rangle=\sum_{x,y;a,b=1}^{N,K}M_{x,y}^{a,b}P(a,b|x,y)$. Then, we define the quantities\footnote{Observe that both quantities depend on $N$ and $K$, so we should denote $LV^{N,K}_{\mathcal Q}$ and $LV^{N,K}_{\mathcal{NS}}$, but we will simplify notation when $N$ and $K$ are clear from the context.}
\begin{align*}
LV_{\mathcal Q}=\sup_M\frac{\omega_{\mathcal Q}(M)}{\omega_{\mathcal L}(M)}, \hspace{0.4 cm} \text{ and }\hspace{0.4 cm} LV_{\mathcal{NS}}=\sup_M\frac{\omega_{NS}(M)}{\omega_{\mathcal L}(M)}.
\end{align*}

The quantity $LV_{\mathcal Q}$ has been deeply studied during the last years. The fact that $LV_{\mathcal Q}>1$ is rephrased as the existence of quantum probability distributions which are not classical (phenomenon known as \emph{quantum no-locality}) and $LV_{\mathcal Q}$ can be understood as a measure of the deviation of quantum mechanics from the classical theory. Because of historical reasons we sometimes denote a general tensor $M$ as a Bell inequality\footnote{Formally, the tensor $M$ defines the inequality $\langle M,P\rangle\leq \omega_{\mathcal L}(M)$ for every $P\in  \mathcal L$.}. Beyond its theoretical interest as a measure of non-locality, $LV_{\mathcal{\mathcal Q}}$ turns out to be very useful in many different tasks such as dimension witness, communication complexity or entangled games (see \cite{JPPVW2} for more detail).

The aim of this work is to focus on the quantity $LV_{\mathcal{NS}}$ as a way to study the ultimate limitations of any \emph{meaningful} physical theory.  Indeed, since every set of probability distributions defined by a physical theory via a Bell scenario (when Alice and Bob are space-like separated and there is no communication between them) must be contained in $\mathcal{NS}$, the quantity $LV_{\mathcal{NS}}$ should be understood as a global upper bound of the most extreme behaviour  (in terms of Bell violations) we can expect from any such theory. In addition, since non-locality is behind some of the most important applications of quantum information such as device-independent quantum cryptography (see for instance \cite{ABGMPS}), where one is particularly interested in avoiding hypothesis about the adversaries (such as being quantum), the quantity $LV_{\mathcal{NS}}$ can be thought of an abstract quantification of the most extreme possible scenario one can expect.

In addition, Bell inequality violations have a particularly interesting interpretation when we look at two-prover one-round games (or, simply, games), where two collaborative players must answer some outputs after being asked certain questions by a referee. These games play a major role in computer science because many interesting problems can be re-phrased in terms of them. The interesting point for us is that games can be identified with particular tensors $G\in \R^{N^2K^2}$ with non-negative coefficients. In that context,  the quantity $\omega_{\mathcal A} (G)$ denotes the highest probability of winning the corresponding game when the players are restricted to the use of classical resources ($\mathcal A=\mathcal L$), quantum resources ($\mathcal A=\mathcal Q$) and non-signalling resources ($\mathcal A=\mathcal NS$) in their strategies to play the game. Then, the quantities $LV_{\mathcal Q}$ and $LV_{\mathcal{NS}}$ restricted to games $G$ are measures of how much better quantum  and non-signalling strategies are compared to classical ones. 

In order to study the quantity $LV_{\mathcal{NS}}$ we will follow a similar approach to the one followed by Tsirelson in his seminal paper \cite{TsirelsonII} to study quantum correlation matrices. That is, we will understand this problem by means of tensor norms and then we will crucially use Banach space machinery to study it. This was also the spirit of the works \cite{JungeP11low, JPPVW, JPPVW2} to study $LV_{\mathcal Q}$, where the authors reduced the problem to the study of two different norms in $\R^{N^2K^2}=\R^{NK}\otimes \R^{NK}$. Indeed, based on \cite{JungeP11low, JPPVW},  it was shown in \cite{PV-Survey} that, given a game $G$, one has
\begin{align}\label{classical-quantum games}
\omega_{\mathcal L}(G)= \|G\|_{\ell_1^N(\ell_\infty^K)\otimes_{\epsilon} \ell_1^N(\ell_\infty^K)},\hspace{0.4 cm}\text{   and  }\hspace{ 0.4 cm  }\omega_{\mathcal{Q}}(G)= \|G\|_{\ell_1^N(\ell_\infty^K)\otimes_{min} \ell_1^N(\ell_\infty^K)},
\end{align}where here the first norm is the $\epsilon$ tensor norm in the category of Banach spaces and the second norm is the minimal tensor norm in the category of operator spaces. A precise definition of these spaces, as well as those appearing in Equations (\ref{NSnormBell}) and (\ref{NS norm}) below, will be given in Section \ref{Sec:Preliminaries}. Using this description of the quantities $\omega_{\mathcal L}(G)$ and $\omega_{\mathcal Q}(G)$, one can easily show \cite{JungeP11low, PV-Survey}  that 
\begin{align}\label{upper bound quantum}
LV_{\mathcal Q}\leq O(\text{min}\{N,K\}),
\end{align}when we restrict to games. In fact, as we will explain later, this upper bound also works for general Bell inequalities.

It turns out that the non-signalling value of a game can also be described by  a certain natural norm in $\R^{N^2K^2}$.  However, the description in the new context is trickier because of the absence of a tensor product structure. Indeed, it can be seen that no tensor norm in $\ell_1^N(\ell_\infty^K)\otimes  \ell_1^N(\ell_\infty^K)$ can describe the non-signalling value of a game. Instead, given a general tensor $M\in \R^{N^2K^2}$, one has to consider the following norm:
\begin{align}\label{NSnormBell}
\|M\|_{\text{DNS}}=\inf\big\{\|M_1\|_{\ell_1^N(\ell_\infty^K(\ell_1^N(\ell_\infty^K)))} +\|M_2^T\|_{\ell_1^N(\ell_\infty^K(\ell_1^N(\ell_\infty^K)))} :\, M=M_1+M_2\big\},
\end{align} where, for a given $z\in \R^N\otimes \R^K\otimes \R^N\otimes\R^K$, $z^T=\text{flip}(z)$ and  $\text{flip}: \mathbb R^{N^2K^2}\longrightarrow \mathbb R^{N^2K^2}$ is the linear map  defined on elementary tensors by $\text{flip}(e_x\otimes e_a\otimes e_y\otimes e_b)=e_y\otimes e_b\otimes e_x\otimes e_a$. Then, our first result provides a natural description for the non-signalling value of a game.
\begin{theorem}\label{NS value games}
Given a tensor $G\in \R^{N^2K^2}$ with non-negative coefficients, we have
$$\omega_{\mathcal{NS}}(G)=\|G\|_{\text{DNS}}.$$
\end{theorem}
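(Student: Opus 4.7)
The plan is to establish the two inequalities $\omega_{\mathcal{NS}}(G)\leq\|G\|_{\text{DNS}}$ and $\omega_{\mathcal{NS}}(G)\geq\|G\|_{\text{DNS}}$ separately; the first one directly, the second one via LP duality.

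For the upper bound, the starting observation is that any $P\in\mathcal{NS}$ factorises in both conditional directions: by \eqref{ns1} Alice's marginal $P_A(a|x)=\sum_bP(a,b|x,y)$ is well-defined and $P(a,b|x,y)=P_A(a|x)P_{B|A}(b|x,a,y)$, and by \eqref{ns2} there is a symmetric factorisation $P(a,b|x,y)=P_B(b|y)P_{A|B}(a|y,b,x)$. Given any decomposition $G=M_1+M_2$, plugging the $A\to B$ factorisation into $\langle M_1,P\rangle$ and using successively that $P_{B|A}(\cdot|x,a,y)$ and $P_A(\cdot|x)$ are probability measures (so averages are dominated by maxima) gives
\begin{equation*}
\langle M_1,P\rangle\leq\sum_{x,a}P_A(a|x)\sum_y\max_b|M_1(x,a,y,b)|\leq\|M_1\|_{\ell_1^N(\ell_\infty^K(\ell_1^N(\ell_\infty^K)))};
\end{equation*}
the $B\to A$ factorisation similarly yields $\langle M_2,P\rangle\leq\|M_2^T\|_{\ell_1^N(\ell_\infty^K(\ell_1^N(\ell_\infty^K)))}$. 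Infimising over decompositions and supremising over $P\in\mathcal{NS}$ produces the upper bound.

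For the reverse inequality I would realise $\omega_{\mathcal{NS}}(G)=\max_{P\in\mathcal{NS}}\langle G,P\rangle$ as a linear program. Writing the normalisation $\sum_{a,b}P(a,b|x,y)=1$ together with \eqref{ns1}, \eqref{ns2} as equalities $\sum_bP(a,b|x,y)=q_A(a|x)$ and $\sum_aP(a,b|x,y)=q_B(b|y)$ against free auxiliary marginals $q_A,q_B$, and introducing multipliers $\lambda(x,y),\alpha(a,x,y),\beta(b,x,y)$, the LP dual takes the form
\begin{equation*}
\omega_{\mathcal{NS}}(G)=\min\Big\{\sum_{x,y}\lambda(x,y):\ \lambda(x,y)+\alpha(a,x,y)+\beta(b,x,y)\geq G(x,a,y,b)\Big\}
\end{equation*}
subject to $\sum_y\alpha(a,x,y)=0$ for all $a,x$ and $\sum_x\beta(b,x,y)=0$ for all $b,y$ (coming from $q_A,q_B$ being free). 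From any dual feasible $(\lambda,\alpha,\beta)$ I would then manufacture a decomposition of $G$ attaining the same objective. Setting $\mu(x,y):=-\min_a\alpha(a,x,y)$ and $\nu(x,y):=\lambda(x,y)-\mu(x,y)$, the dual inequality together with $G\geq 0$ forces $\nu\geq-\min_b\beta$, so that $u(x,a,y):=\mu(x,y)+\alpha(a,x,y)$ and $v(y,b,x):=\nu(x,y)+\beta(b,x,y)$ are non-negative and satisfy $u+v\geq G$. Defining $M_1(x,a,y,b):=\max\{G(x,a,y,b)-v(y,b,x),0\}$ and $M_2:=G-M_1$ gives $M_1,M_2\geq 0$, and $u+v\geq G$ yields $\max_bM_1\leq u$ and $\max_aM_2\leq v$. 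Consequently
\begin{equation*}
\|M_1\|_{\ell_1^N(\ell_\infty^K(\ell_1^N(\ell_\infty^K)))}+\|M_2^T\|_{\ell_1^N(\ell_\infty^K(\ell_1^N(\ell_\infty^K)))}\leq\sum_x\max_a\sum_yu(x,a,y)+\sum_y\max_b\sum_xv(y,b,x)=\sum_{x,y}\lambda(x,y),
\end{equation*}
where the last equality uses $\sum_y\alpha=0$ and $\sum_x\beta=0$ to make $\sum_yu(x,a,y)=\sum_y\mu(x,y)$ and $\sum_xv(y,b,x)=\sum_x\nu(x,y)$ independent of $a$ and $b$ respectively. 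Infimising over dual feasible points and invoking strong LP duality closes the chain $\|G\|_{\text{DNS}}\leq\omega_{\mathcal{NS}}(G)$.

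The main technical hurdle I expect is the LP dualisation itself, where sign conventions for the multipliers of the signed equality constraints from \eqref{ns1}, \eqref{ns2} must be tracked carefully. The hypothesis $G\geq 0$ is used only in the lower bound, precisely to guarantee $\nu\geq-\min_b\beta$ in the splitting step, which is also what justifies restricting the infimum defining $\|G\|_{\text{DNS}}$ to non-negative decompositions.
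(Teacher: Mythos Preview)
Your argument is correct, but it follows a genuinely different route from the paper's.

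The paper never writes down an LP. Instead it first defines $\|\cdot\|_{\text{DNS}}$ abstractly as the dual of the NS norm and invokes the general Banach-space identity $(X\cap Y)^*=X^*+Y^*$ to obtain the decomposition formula \eqref{NSnormBell}. The hard inequality $\|G\|_{\text{DNS}}\leq\omega_{\mathcal{NS}}(G)$ is then handled through the auxiliary set $\text{SNOS}$ of sub-no-signalling elements: the paper shows that non-negative elements of the NS unit ball are exactly $\text{SNOS}$ (Proposition~\ref{p:positiveSNOS}), and that every $P\in\text{SNOS}$ is pointwise dominated by some $\tilde P\in\mathcal{NS}$ (a result borrowed from Ito). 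For non-negative $G$ this gives $\sup_{\|P\|_{\text{NS}}\leq 1}|\langle G,P\rangle|=\sup_{P\in\text{SNOS}}\langle G,P\rangle\leq\sup_{P\in\mathcal{NS}}\langle G,P\rangle$ in one line.

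Your approach trades the abstract duality and the SNOS majorisation lemma for an explicit LP dual and a hands-on construction of a decomposition $G=M_1+M_2$ from any dual-feasible $(\lambda,\alpha,\beta)$. This is more self-contained---it needs neither interpolation-space duality nor Ito's lemma---and it makes transparent exactly where non-negativity of $G$ enters (the step $\nu\geq-\min_b\beta$). The paper's route, on the other hand, situates the result inside a norm framework that is reused throughout (for Theorem~\ref{thm: games} and the general Bell-inequality analysis in Section~\ref{Sec: General Bell}), so conceptually it pays off later even if it is less direct here. Both proofs ultimately exploit the same structural fact---that the NS polytope has two ``nested'' descriptions, one from Alice's side and one from Bob's---but yours does so through LP multipliers while the paper does so through the $\max$ of the two $\ell_\infty(\ell_1(\ell_\infty(\ell_1)))$ norms.
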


With this description at hand, one can show the following upper bound. 
\begin{theorem}\label{thm: games}
Given a tensor $G\in \R^{N^2K^2}$ with non-negative coefficients,
$$\frac{\omega_{\mathcal{NS}}(G)}{\omega_{\mathcal L}(G)}\leq \text{min}\{N,K\}.$$ 
\end{theorem}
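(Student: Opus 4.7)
The plan is to derive Theorem \ref{thm: games} directly from Theorem \ref{NS value games} together with two elementary estimates comparing the iterated $\ell_1$--$\ell_\infty$ norm appearing in (\ref{NSnormBell}) with the classical value $\omega_{\mathcal L}(G)$. By Theorem \ref{NS value games} we have $\omega_{\mathcal{NS}}(G)=\|G\|_{\text{DNS}}$, and the trivial decomposition $M_1=G$, $M_2=0$ in (\ref{NSnormBell}) yields
$$\omega_{\mathcal{NS}}(G)\le \|G\|_{\ell_1^N(\ell_\infty^K(\ell_1^N(\ell_\infty^K)))}=\sum_{x=1}^{N}\max_{a}\sum_{y=1}^{N}\max_{b}G_{x,y}^{a,b}.$$
It therefore suffices to bound this iterated norm by $\min\{N,K\}\cdot \omega_{\mathcal L}(G)$.

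For the factor $N$, I would simply estimate $\sum_x\le N\max_x$ and then exploit the non-negativity of $G$. For any fixed pair $(x^\ast,a^\ast)$, let $g(y)$ be an index attaining $\max_b G_{x^\ast,y}^{a^\ast,b}$, and extend Alice's strategy arbitrarily from $f(x^\ast)=a^\ast$. Since all entries of $G$ are non-negative, dropping the terms with $x\neq x^\ast$ gives
$$\sum_y \max_b G_{x^\ast,y}^{a^\ast,b}=\sum_y G_{x^\ast,y}^{a^\ast,g(y)}\le \sum_{x,y}G_{x,y}^{f(x),g(y)}\le \omega_{\mathcal L}(G),$$
and maximizing over $(x^\ast,a^\ast)$ delivers the $N$-estimate.

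For the factor $K$, I would fix for each $x$ an optimizer $a^\ast(x)$ attaining $\max_a\sum_y\max_b G_{x,y}^{a,b}$, bound each $\max_b$ by $\sum_b$ using non-negativity, exchange the order of summation, and apply $\sum_b\le K\max_b$ to obtain
$$\sum_x\max_a\sum_y\max_b G_{x,y}^{a,b}\le K\max_b\sum_{x,y}G_{x,y}^{a^\ast(x),b}\le K\cdot \omega_{\mathcal L}(G),$$
where the last inequality holds because the middle quantity is the value of $G$ on the deterministic classical strategy given by Alice's assignment $x\mapsto a^\ast(x)$ together with a constant Bob output. Taking the smaller of the two bounds yields Theorem \ref{thm: games}. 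The only nontrivial ingredient is Theorem \ref{NS value games}; once that is available, the remaining argument is entirely elementary, with non-negativity of $G$ used both to convert $\max_b$ into $\sum_b$ and to drop unneeded terms from a classical strategy, so I do not foresee any genuine obstacle beyond correctly assembling these estimates.
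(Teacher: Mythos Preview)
Your proposal is correct and follows essentially the same route as the paper: first invoke Theorem \ref{NS value games} to bound $\omega_{\mathcal{NS}}(G)$ by the iterated norm $\sum_x\max_a\sum_y\max_b G_{x,y}^{a,b}$, and then show this iterated norm is at most $N\cdot\omega_{\mathcal L}(G)$ and at most $K\cdot\omega_{\mathcal L}(G)$. The paper packages the two estimates as Propositions \ref{cota1} and \ref{cota4}, phrased in tensor-norm language (using $\omega_{\mathcal L}(G)=\|G\|_{\ell_1^N(\ell_\infty^K)\otimes_\epsilon\ell_1^N(\ell_\infty^K)}$ and inclusions of unit balls such as $B_{\ell_1^{NK}}\subset B_{\ell_\infty^N(\ell_1^K)}$ and $B_{\ell_\infty^{NK}}\subset K\,B_{\ell_\infty^N(\ell_1^K)}$), while you give the same bounds combinatorially by exhibiting explicit deterministic strategies for Alice and Bob. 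The only substantive difference is that the paper's $N$-estimate (Proposition \ref{cota1}) is stated and proved for arbitrary tensors, without using non-negativity; your version of that step does use non-negativity (to discard the $x\neq x^\ast$ terms), which is harmless here since the theorem only concerns non-negative $G$, but slightly less than what the paper establishes.
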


Hence, somehow surprisingly, we see that, although a priori non-signalling strategies can be much better than quantum strategies, the same upper bound applies in both cases. In fact, it is known that the upper bound (\ref{upper bound quantum}) is optimal in the number of outputs $K$ (\cite{BuhrmanRSW12}). Hence, in terms of this number, non-signalling strategies do not provide an  advantage with respect to quantum strategies.  In other words, when comparing with classical strategies, quantum strategies are as good as any other non-signalling theory can be.

Interestingly, our second result shows that Theorem \ref{thm: games} is  in fact essentially optimal in both parameters.
\begin{theorem}\label{thm: games lower bound}
For every natural number $n$ there exists a pointwise non-negative tensor $G_n\in \R^{N^2K^2}$, with $N=K=n$,  such that
$$\frac{\omega_{\mathcal{NS}}(G)}{\omega_{\mathcal L}(G)}\geq D\frac{n}{\log n},$$where $D$ is a universal constant.
\end{theorem}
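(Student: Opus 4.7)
The plan is to construct an explicit non-negative tensor $G_n\in\R^{n^4}$ (with $N=K=n$) and certify the claimed ratio by using the tensor-norm descriptions $\omega_{\mathcal L}(G_n)=\|G_n\|_{\ell_1^n(\ell_\infty^n)\otimes_\epsilon\ell_1^n(\ell_\infty^n)}$ from (\ref{classical-quantum games}) and $\omega_{\mathcal{NS}}(G_n)=\|G_n\|_{\text{DNS}}$ from Theorem \ref{NS value games}. A preliminary sanity check rules out the naive random ``unique game'' (one uniformly chosen accepting pair per input): a Chernoff plus union bound over the $n^{2n}$ deterministic strategies gives $\omega_{\mathcal L}=\Theta(n)$, while a shifted-diagonal non-signalling strategy $P(a,b|x,y)=\tfrac{1}{n}\mathbf{1}[a-a^*_{xy}\equiv b-b^*_{xy}\pmod n]$ gives only $\omega_{\mathcal{NS}}=\Theta(n)$, so the ratio is merely $\Theta(1)$ and the construction must be more structured.

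The natural candidate is a Khot--Vishnoi style game, in which inputs and outputs are indexed by cosets of the Hadamard code in $\{0,1\}^k$ (so $n\sim 2^k/k$) and the accepting condition is defined by a fixed noise channel on the coset coordinates. Using (\ref{classical-quantum games}) one reduces $\omega_{\mathcal L}(G_n)$ to a supremum over deterministic strategies $f,g\colon[n]\to[n]$; a Bonami--Beckner / noise-sensitivity argument as in \cite{BuhrmanRSW12} then gives $\omega_{\mathcal L}(G_n)\leq C\log^2 n/n$ after normalization. For the non-signalling lower bound one uses Theorem \ref{NS value games}: LP duality identifies the optimal $P\in\mathcal{NS}$ with a solution of the basic LP relaxation of the KV constraint system, and since $\omega_{\mathcal Q}(G_n)\geq c$ we obtain $\omega_{\mathcal{NS}}(G_n)\geq c$ as well. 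This already gives the separation $\Omega(n/\log^2 n)$.

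To reach $n/\log n$ one must save one additional logarithm. Two avenues look promising: (i) tighten the classical upper bound by a tensor-norm argument that exploits the product structure $\ell_1^n(\ell_\infty^n)\otimes_\epsilon\ell_1^n(\ell_\infty^n)$ directly, e.g.\ by inserting a factorization through $\ell_2$-type norms and using cotype to gain a $\log n$ factor over the raw noise-sensitivity bound; or (ii) replace the Khot--Vishnoi game by a probabilistic construction of the form $G_n=cJ+M$, with $J$ the all-ones tensor and $M$ a carefully normalized random $\pm 1$ tensor, choosing $c$ just large enough to enforce non-negativity. For (ii), random matrix arguments control $\|M\|_\epsilon$, the DNS characterization together with a PR-box-like non-signalling witness controls $\|M\|_{\text{DNS}}$, and the shift by $cJ$ perturbs both values by at most $O(n^2)$ (since $\omega_{\mathcal L}(J)=\omega_{\mathcal{NS}}(J)=n^2$), which is absorbed into the main estimates.

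The main obstacle is precisely this last logarithmic factor. In the KV route the difficulty is to prove a genuinely non-signalling (as opposed to merely quantum) lower bound that beats the quantum LP value by an additional $\log n$, or equivalently to shave a $\log$ from the classical noise-sensitivity bound using the $\epsilon$-norm structure; in the random route the challenge is to carry the full gap between the $\ell_1^n(\ell_\infty^n)\otimes_\epsilon\ell_1^n(\ell_\infty^n)$ and DNS norms through a probabilistic analysis without picking up spurious logarithms from Chernoff and union-bound steps. Either way, the key point is that the tensor-norm machinery developed earlier in the paper provides the right framework to exhibit and quantify the optimal $n/\log n$ separation, and matches the upper bound of Theorem \ref{thm: games} up to a single logarithm.
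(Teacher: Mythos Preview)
Your sanity check is where the plan goes wrong. You dismiss the ``naive random unique game'' as giving only a $\Theta(1)$ ratio, but you analyze the wrong variant: a game with a \emph{single} accepting pair $(a^*_{xy},b^*_{xy})$ per question. The paper instead takes a random \emph{permutation} unique game: for each $(x,y)$ one draws $\sigma_{xy}\in S_n$ uniformly and independently, and the accepting set is $\{(j,\sigma_{xy}(j)):1\le j\le n\}$, i.e.\ $n$ accepting pairs per question rather than one. For this game the non-signalling value is trivially maximal, $\omega_{\mathcal{NS}}(G_n)=n^2$, witnessed by the strategy $P(a,b|x,y)=\tfrac{1}{n}\mathbf{1}[b=\sigma_{xy}(a)]$, which has uniform marginals on both sides and wins every question with probability $1$ (not $1/n$). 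The classical value is then controlled by exactly the Chernoff plus union-bound computation you sketch: for a fixed deterministic strategy $(a(\cdot),b(\cdot))$ the value $\sum_{x,y}\mathbf{1}[\sigma_{xy}(a(x))=b(y)]$ is a sum of $n^2$ independent Bernoulli$(1/n)$ variables with mean $n$, and the union bound over the $n^{2n}$ strategies forces $\omega_{\mathcal L}(G_n)\le Cn\log n$ with positive probability. The ratio $\Omega(n/\log n)$ follows immediately.

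So the simple probabilistic construction you discarded in your first paragraph already proves the theorem once one uses permutation constraints; the Khot--Vishnoi route, the random $cJ+M$ construction, and the entire discussion of shaving an extra logarithm are unnecessary detours created by that initial miscalculation. Note also that your proposal, as written, is not a proof but a plan with an unresolved obstacle (the missing $\log$), whereas the permutation argument is complete and elementary.
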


Although pointwise non-negative tensors, in particular games, have a great relevance in computer sciences, from a physical point of view, studying this particular case is not enough to have a good knowledge about the sets $\mathcal L$, $\mathcal Q$ and $\mathcal {NS}$. More precisely, one can easily check (see \cite[Section 5]{JungeP11low} for a more complete study of the geometry of these sets) that if we define $$\widetilde{\mathcal A}=co\big(\mathcal A\cup -\mathcal A\big)\hspace{0.4 cm}\text{ for  }\hspace{0.4 cm}\text{$\mathcal A\in \{\mathcal L, \mathcal Q, \mathcal {NS}\}$},$$ where  $\text{co}(X)$ denotes the convex hull of $X$, the quantities $LV_{\mathcal{Q}}$ and $LV_{\mathcal{NS}}$ are the smallest positive numbers such that $$\widetilde{\mathcal Q}\subseteq LV_{\mathcal{Q}}\cdot \widetilde{\mathcal L}\hspace{0.4 cm}\text{ and  }\hspace{0.4 cm}\widetilde{\mathcal NS}\subseteq LV_{\mathcal{NS}}\cdot \widetilde{\mathcal L}.$$

Observe that the upper bound in Theorem \ref{thm: games} for non-negative tensors does not say anything about the values $LV_{\mathcal{Q}}$ and $LV_{\mathcal{NS}}$. This can be easily understood by looking at the context where we fix $K=2$. While, the upper bound (\ref{upper bound quantum}) applies in this context to state that $LV_{\mathcal Q}(N,2)=O(1)$ independently of $N$, it is well known (and we will explain it later) than $LV_{\mathcal {NS}}(N,2)= \Omega(\sqrt{N})$ in this case. In particular, we know that Theorem \ref{thm: games} cannot hold for general elements in $\R^{N^2K^2}$.

On the other hand, the study of general Bell inequalities presents some additional problems. As it was explained in \cite[Section V]{JungeP11low}  (see also \cite[Section IV]{PV-Survey}), the main issue in the general context is that, since the set $\mathcal C$ is contained in a proper affine subspace of $\R^{N^2K^2}$, one cannot expect the values $\omega_{\mathcal A}$ to be equivalent to certain norms in $\R^{N^2K^2}$. To circumvent this problem, in \cite{JungeP11low} the authors defined an alternative (operator) space denoted by $NSG(N,K)^*$ of dimension $NK-N+1$, such that Equation (\ref{classical-quantum games}) holds for every tensor $M$ when the space $\ell_1^N(\ell_\infty^K)$ is replaced by $NSG(N,K)^*$. This result provided a geometrical description of the sets $\mathcal L$ and $\mathcal Q$ as unit balls of certain well known tensor norms. In addition, it was shown that the space $NSG(N,K)^*$ is a ``twisted version'' of the space $\ell_1^N(\ell_\infty^K)$, allowing the authors to prove the upper bound in Equation  (\ref{upper bound quantum}) for general tensors. As we mentioned before, the non-signalling case is different because splitting the space $\R^{N^2K^2}$ as $\R^{NK}\otimes \R^{NK}$ does not seem to be useful. In fact, we will show that the right Banach space to be considered in the new context is the real linear space $\mathcal{ANS}$, defined by the elements $R\in \R^{N^2K^2}$ for which there exist two tensors $P, Q\in \R^{NK}$ and a constant $c\in \R$ verifying
\begin{align*}
&\sum_pR_{xy}^{a,p}=P_x^a, \hspace{0.3 cm} \sum_qR_{xy}^{q,b}=Q_y^b\hspace{0.3 cm}\text{ and}\hspace{0.3 cm} \sum_{p,q}R_{xy}^{p,q}=c \hspace{0.4 cm}\text{for every $x,y,a,b$},
\end{align*}
endowed with the norm
\begin{align}\label{NS norm}
\|R\|_{\text{NS}}=\max\Big\{\|R\|_{\ell_\infty^N(\ell_1^K(\ell_\infty^N(\ell_1^K)))},\|R^T\|_{\ell_\infty^N(\ell_1^K(\ell_\infty^N(\ell_1^K)))}\Big\}.
\end{align}

This Banach space allows us to completely characterize the set of non-signalling probability distributions by means of a natural norm.

\begin{theorem}\label{NS description}
Let $\mathcal{ANS}$ be the linear space above endowed with the norm $\|\cdot\|_{\text{NS}}$ and let us denote by $B_{\mathcal{ANS}}$ its unit ball. Then,
$$B_{\mathcal{ANS}}=co(\mathcal{NS}\cup -\mathcal{NS}).$$
\end{theorem}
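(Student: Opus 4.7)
For the easy inclusion $\text{co}(\mathcal{NS}\cup-\mathcal{NS})\subseteq B_{\mathcal{ANS}}$ I compute $\|P\|_{\text{NS}}$ for $P\in\mathcal{NS}$: non-negativity of $P$ together with the non-signalling condition gives $\sum_b|P(a,b|x,y)|=\sum_bP(a,b|x,y)$ independent of $y$, so (denoting the $\ell_\infty^N(\ell_1^K(\ell_\infty^N(\ell_1^K)))$-norm by $\|\cdot\|_A$ for brevity) $\|P\|_A=\max_x\sum_a\sum_bP(a,b|x,y)=1$; the same computation for the flipped norm gives $\|P\|_{\text{NS}}=1$, and convexity and balancedness of $B_{\mathcal{ANS}}$ close this inclusion.

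For the reverse inclusion, given $R\in B_{\mathcal{ANS}}$ the plan is to decompose $R=R_1-R_2$ explicitly with $R_1,R_2\in\mathcal{ANS}$ pointwise non-negative and total masses satisfying $c(R_1)+c(R_2)\leq 1$. Writing $R=c(R_1)(R_1/c(R_1))-c(R_2)(R_2/c(R_2))$ with $R_i/c(R_i)\in\mathcal{NS}$, and (if the total $c(R_1)+c(R_2)$ is strictly less than $1$) adding a trivial $\varepsilon(P_0-P_0)$ correction for any fixed $P_0\in\mathcal{NS}$, this places $R$ in $\text{co}(\mathcal{NS}\cup-\mathcal{NS})$.

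The construction rests on the following key identity. Let $R^\pm$ denote the pointwise positive/negative parts of $R$, and set $\alpha(x,y,a)=\sum_bR^+(x,y,a,b)$, $\beta(x,y,a)=\sum_bR^-(x,y,a,b)$. Because $R\in\mathcal{ANS}$, the difference $\alpha(x,y,a)-\beta(x,y,a)=P_x^a$ is independent of $y$, so $\max_y\alpha-\max_y\beta=P_x^a$. Summing over $a$, the two resulting functions of $x$ differ by the constant $c=\sum_aP_x^a$, hence attain their maxima at the same $x$; this yields
\[
\|R\|_A=\|R^+\|_A+\|R^-\|_A\qquad\text{and}\qquad\|R^+\|_A-\|R^-\|_A=c.
\]
The analogous identities hold for the flipped norm, and comparing the two sides forces $\|R^+\|_{\text{NS}}$ and $\|R^-\|_{\text{NS}}$ to be attained on the same (Alice or Bob) side, so $\|R\|_{\text{NS}}=\|R^+\|_{\text{NS}}+\|R^-\|_{\text{NS}}$.

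With this identity at hand, set $R_1=G$ to be the minimal non-negative non-signalling tensor dominating $R^+$ pointwise, with total mass $c_G:=\|R^+\|_{\text{NS}}=(\|R\|_{\text{NS}}+c)/2$: pick Alice-marginal targets $P^G_{x,a}\geq\max_y\alpha(x,y,a)$ with $\sum_aP^G_{x,a}=c_G$ for every $x$, Bob-marginal targets $Q^G_{y,b}$ analogously, and fill in each slice $G(x,y,\cdot,\cdot)$ by solving the classical transportation problem with pointwise lower bound $R^+(x,y,\cdot,\cdot)$; feasibility of both the target marginals and each slice-wise transportation problem is immediate once $c_G$ dominates $\|R^+\|_A$ and its flipped-norm counterpart, as guaranteed by the key identity. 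Then $R_2:=G-R=G-R^++R^-\geq R^-\geq 0$ is also non-negative and non-signalling, with total $c_G-c=\|R^-\|_{\text{NS}}$, and $c(R_1)+c(R_2)=2c_G-c=\|R\|_{\text{NS}}\leq 1$, as required. The central technical ingredient is the key identity above, where the non-signalling structure of $R$ is used to split $\|R\|_{\text{NS}}$ additively between $R^+$ and $R^-$; the remaining steps are combinatorial.
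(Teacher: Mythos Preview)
Your proof is correct and follows the same overall strategy as the paper: both rely on the key additivity identity $\|R\|_{\text{NS}}=\|R^+\|_{\text{NS}}+\|R^-\|_{\text{NS}}$ for $R\in\mathcal{ANS}$ (this is the paper's Lemma~\ref{lemma2}, and your argument via the constancy of $\alpha-\beta$ in $y$ and of $f^+-f^-$ in $x$ is essentially the same proof), and both then dominate $R^+$ by an element of $\|R^+\|_{\text{NS}}\cdot\mathcal{NS}$ (the paper does this via its Lemma~\ref{lemmaito}, you via feasibility of a transportation problem with lower bounds, which is the same content).

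The one genuine difference is in the endgame. The paper constructs \emph{both} dominating tensors $\tilde{R}^+\geq R^+$ and $\tilde{R}^-\geq |R^-|$ separately via the explicit formula of Lemma~\ref{lemmaito}, and then must verify by direct computation that the added ``filling'' on each side agrees, i.e.\ that $s_{xy}^+(a)=s_{xy}^-(a)$, $t_{xy}^+(b)=t_{xy}^-(b)$, $u_{xy}^+=u_{xy}^-$, so that $R=\tilde{R}^+-\tilde{R}^-$. Your approach is cleaner: you construct only $G\geq R^+$ with $G\in c_G\cdot\mathcal{NS}$, and then \emph{define} $R_2:=G-R$, which is automatically in $\mathcal{ANS}$ (as a difference of elements of $\mathcal{ANS}$) and automatically non-negative (since $G\geq R^+\geq R$ pointwise), hence lies in $(c_G-c)\cdot\mathcal{NS}$. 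This sidesteps the coordinate-by-coordinate matching of fillings entirely. The paper's route has the minor advantage of exhibiting an explicit formula for both pieces, but your route is shorter and conceptually tidier.
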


In addition, we will show in Theorem \ref{banachmazurdistance} that the space $\mathcal{ANS}$ is a twisted version of  $\R^{N^2K^2}$ endowed with the norm $\|\cdot\|_{\text{NS}}$ in the corresponding dimension. As a consequence of this, we can use  techniques from Banach space theory to obtain the following upper bound.

\begin{theorem}\label{boundlv}
Let $N$ and $K$ be two natural numbers. Then, 
$$LV_{\mathcal{NS}}\leq O\big(\text{min}\{N, \sqrt{NK}\}\big).$$

Moreover, according to Theorem \ref{thm: games lower bound}, this upper bound is sharp and it is attained on non-negative tensors.
\end{theorem}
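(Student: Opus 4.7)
The plan is to translate the bound on $LV_{\mathcal{NS}}$ into an estimate for the norm of the identity between two mixed-norm Banach spaces, and then to prove that inequality by a two-step argument: a crude $O(N)$ bound by counting over the $N$ outer coordinates, and a refined $O(\sqrt{NK})$ bound via Grothendieck-type/cotype-$2$ machinery. First I would convert the claim: by polarity, $LV_{\mathcal{NS}}\leq C$ is equivalent to $\omega_{\mathcal{NS}}(M)\leq C\,\omega_{\mathcal L}(M)$ for every tensor $M$. Theorem \ref{NS description} identifies $\widetilde{\mathcal{NS}}=B_{\mathcal{ANS}}$, and Theorem \ref{banachmazurdistance} says that $(\mathcal{ANS},\|\cdot\|_{\text{NS}})$ is uniformly isomorphic (with universal constant) to $\R^{N^2K^2}$ endowed with the NS-norm in the corresponding ambient dimension, so up to absolute constants $\omega_{\mathcal{NS}}(M)$ equals the dual norm of $\|\cdot\|_{\text{NS}}$. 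Dualising the intersection in (\ref{NS norm}) gives the $\inf$-decomposition $\|M\|_{\text{NS}^*}=\inf\{\|M_1\|_{\ell_1^N(\ell_\infty^K(\ell_1^N(\ell_\infty^K)))}+\|M_2^T\|_{\ell_1^N(\ell_\infty^K(\ell_1^N(\ell_\infty^K)))}:M=M_1+M_2\}$, extending Theorem \ref{NS value games} beyond games. On the classical side I would invoke the description of $\omega_{\mathcal L}$ from \cite{JungeP11low} (recalled after equation (\ref{classical-quantum games})): namely that $\omega_{\mathcal L}(M)$ is comparable to the $\epsilon$-tensor norm $\|M\|_{NSG(N,K)^*\otimes_\epsilon NSG(N,K)^*}$, and that $NSG(N,K)^*$ is only a bounded-distance twisted version of $\ell_1^N(\ell_\infty^K)$, so $\omega_{\mathcal L}(M)$ is equivalent, up to a universal constant, to $\|M\|_{\ell_1^N(\ell_\infty^K)\otimes_\epsilon \ell_1^N(\ell_\infty^K)}$.

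With these reductions, the theorem boils down to proving, for every tensor $M$, the norm inequality
\[
\|M\|_{\ell_1^N(\ell_\infty^K(\ell_1^N(\ell_\infty^K)))}\leq O\bigl(\min\{N,\sqrt{NK}\}\bigr)\,\|M\|_{\ell_1^N(\ell_\infty^K)\otimes_\epsilon \ell_1^N(\ell_\infty^K)},
\]
and then applying it to each summand of the NS decomposition above (the flip-invariance of $\|\cdot\|_{\text{NS}}$ lets one handle $M_2^T$ symmetrically). The $O(N)$ bound is straightforward: for each $x$ pick $a^*(x)$ attaining the inner $\ell_\infty^K$-maximum, so that $\|M\|_{\ell_1^N(\ell_\infty^K(\ell_1^N(\ell_\infty^K)))}=\sum_x\|M^{x,a^*(x)}\|_{\ell_1^N(\ell_\infty^K)}\leq N\cdot\max_x\|M^{x,a^*(x)}\|_{\ell_1^N(\ell_\infty^K)}$. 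The single maximising slice, contracted against the unit vector $e_x\otimes e_{a^*(x)}\in\ell_\infty^N(\ell_1^K)$, is then a valid test for the $\epsilon$-norm and is therefore controlled by $\|M\|_{\ell_1^N(\ell_\infty^K)\otimes_\epsilon \ell_1^N(\ell_\infty^K)}$.

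The hard part, and where I expect to spend most of the effort, is the refined $O(\sqrt{NK})$ estimate, which is what distinguishes general tensors from the pointwise non-negative case of Theorem \ref{thm: games}. The idea is to use the fact that $\ell_1^N(\ell_\infty^K)$ has bounded cotype-$2$ constant, so after Khintchine-randomisation of signs over the inner $(y,b)$ block one can pass from an $\ell_1^N(\ell_\infty^K)$-norm to an $\ell_2^{NK}$-norm at a price $O(\sqrt{NK})$ via Cauchy-Schwarz, and then a dual sign randomisation on the outer $(x,a)$ block (interpreted as a unit element of $\ell_\infty^N(\ell_1^K)$) allows one to absorb this $\ell_2$-norm back into $\|M\|_{\ell_1^N(\ell_\infty^K)\otimes_\epsilon \ell_1^N(\ell_\infty^K)}$. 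The main obstacle will be ensuring that this randomisation is carried out compatibly with the mixed $\ell_1/\ell_\infty$ structure on both blocks: a naive factorisation through $\ell_2^{NK}$ loses extra factors of $K$ in the regime $K\leq N$, so Grothendieck's inequality has to be applied to each $\ell_1(\ell_\infty)$ factor separately rather than globally, and one has to check that the transposition symmetry of $\|\cdot\|_{\text{NS}}$ is preserved throughout. Sharpness of the bound, and the fact that it is attained on pointwise non-negative tensors, is then immediate from Theorem \ref{thm: games lower bound}: the tensors $G_n$ constructed there yield $LV_{\mathcal{NS}}\geq D\,n/\log n$ for $N=K=n$, matching $\min\{N,\sqrt{NK}\}=n$ up to the logarithmic factor.
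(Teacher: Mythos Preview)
Your core technical claim --- the norm inequality
\[
\|M\|_{\ell_1^N(\ell_\infty^K(\ell_1^N(\ell_\infty^K)))}\leq O\bigl(\min\{N,\sqrt{NK}\}\bigr)\,\|M\|_{\ell_1^N(\ell_\infty^K)\otimes_\epsilon \ell_1^N(\ell_\infty^K)}
\]
--- is exactly what the paper proves (Propositions \ref{cota1} and \ref{cota3bis}), and your sketch of the $O(N)$ part matches Proposition \ref{cota1}. For the $O(\sqrt{NK})$ part the paper's argument is more concrete than yours: a Khintchine randomisation over the outer $\ell_1^N$ block gives a factor $\sqrt{N}$ (Lemma \ref{cota2bis}), and the remaining $\sqrt{K}$ comes from the Banach--Mazur estimate $d(\ell_\infty^K,\ell_1^K)=O(\sqrt{K})$. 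Your invocation of cotype~$2$ of $\ell_1^N(\ell_\infty^K)$ and ``Grothendieck applied to each factor'' points in the right direction but is vague; note that the cotype-$2$ constant of $\ell_\infty^K$ already costs $\sqrt{K}$, so you are not getting something for free there.

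The genuine gap is in your reduction step on the \emph{classical} side. You assert that, for a general tensor $M$, $\omega_{\mathcal L}(M)$ is equivalent up to a universal constant to $\|M\|_{\ell_1^N(\ell_\infty^K)\otimes_\epsilon \ell_1^N(\ell_\infty^K)}$. This is false: $\omega_{\mathcal L}(M)$ depends only on the restriction of $M$ to the proper subspace $NSG\otimes NSG=\mathcal{ANS}\subsetneq\mathbb R^{N^2K^2}$, so it vanishes on the annihilator of $\mathcal{ANS}$ while the $\epsilon$-norm does not. Even after quotienting, what you would need is that the $\pi$-norm on $NSG\otimes_\pi NSG$ agrees, up to a universal constant, with the restriction of the $\pi$-norm from $\ell_\infty^N(\ell_1^K)\otimes_\pi\ell_\infty^N(\ell_1^K)$; this is precisely the complementation issue flagged in Remark \ref{Norm-complementation} (the projective norm is \emph{not} injective), and it cannot be waved away by saying that $NSG^*$ is a ``twisted'' $\ell_1^N(\ell_\infty^K)$ --- the two spaces do not even have the same dimension. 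The paper handles this by a different route: it factors the identity $\mathcal{ANS}\to NSG\otimes_\pi NSG$ as $T^{-1}\circ id\circ T$ through the auxiliary spaces of Theorem \ref{banachmazurdistance}, applies your core inequality (in its dual form, Remark \ref{upper bound BNS}) only on the $BNS_{N,K-1}$ summand, and then bounds $\|T^{-1}:Y\to NSG\otimes_\pi NSG\|\leq 9$ directly by observing that $T^{-1}$ sends elementary tensors to elementary tensors $Q_1\otimes Q_2\in NSG\otimes NSG$, where Lemma \ref{NSG-L1} computes $\|Q_i\|_{NSG}$ factorwise. That explicit tensor-splitting of $T^{-1}$ is what replaces the illegitimate global comparison of $\pi$-norms you are relying on.
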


As we have commented before, one cannot expect to have un upper bound similar to (\ref{upper bound quantum}) for the $LV_{\mathcal{NS}}$ because no upper bound for this second quantity can depend only on the number of outputs. However, we see that the upper bound we obtain is ``morally'' comparable to the one for the quantum value of Bell inequalities. Since it is not known if the upper bound (\ref{upper bound quantum}) in $N$ is attained for quantum probability distributions (the best result we have so far is $LV_{\mathcal{Q}}=\Omega(\sqrt{N})$, proved in \cite{JungeP11low}) we cannot conclude that the largest non-signalling Bell violation is comparable to the largest quantum Bell violation. However, our bounds show that this result is not only possible, but very plausible. This emphasizes the idea that, in some sense, the theory of quantum mechanic is as non-local as any other physical theory can be.

\

The organization of the rest of the paper is the following. In Section \ref{Sec:Preliminaries} we introduce some notions about  tensor norms on Banach spaces and we will analyze the particular case of correlation Bell inequalities, showing how in this context the unit ball of different tensor norms precisely describe the different sets of correlations. In Section \ref{Sec:NSnorm} we will study the right norm to be considered in the case of non-signalling probability distributions and also its dual norm, which will lead us to the study of games. In particular, we will prove Theorem \ref{NS value games} and Theorem \ref{thm: games}. In Section \ref{Sec:Game} we will show Theorem \ref{thm: games lower bound} by showing the existence of a family of games $G_n$ for which the non-signalling value and the classical value give a gap of order $n/\log n$. Finally, in Section \ref{Sec: General Bell} we will deal with the case of general tensors and we will prove Theorem \ref{NS description} and Theorem \ref{boundlv}.

\section{Preliminaries}\label{Sec:Preliminaries}

In this section we  introduce  basic notions from  Banach space theory which we will later need.  

Given a normed space $X$, $B_X=\{x\in X\text{ such that }\|x\|\leq1\}$ is its unit ball. The dual space consisting of linear and continuous maps from $X$ to the scalar field $\mathbb{R}$ will be denoted by $X^*$ and its norm has the natural expression $\|x^*\|_{X^*}=\sup_{x\in B_X}\abs{\braket{x^*|x}}$.

All the Banach spaces we are considering in this article are finite dimensional. In particular we will be very interested in the spaces $\ell_1^N$ and $\ell_\infty^N$, enhanced with other Banach spaces, lets say $X$, to create $\ell_1^N(X)$ and $\ell_\infty^N(X)$. The definitions of these two spaces are sequences of $N$ elements in $X$ and the norm of an element $u=\{x_i\}_{i=1}^N$ with $x_i\in X$ is the following:
\begin{align*}
&\|u\|_{\ell_1^N(X)}=\sum_{i=1}^N\|x_i\|_X, \\
&\|u\|_{\ell_\infty^N(X)}=\max_{1\leq i \leq N} \|x_i\|_X.
\end{align*} 

Whenever we have two  finite dimensional normed spaces $X$ and $Y$ we can consider the tensor product of them $X\otimes Y$ and endow it with different tensor norms in order to define different Banach spaces (see \cite[Section 2]{defantfloret} for the following definitions and relations). For a given $u\in X\otimes Y$ the $\epsilon$-norm and the $\pi$-norm are defined by:

\begin{align}\label{Def_eps_pi}
\|u\|_{X\otimes_\epsilon Y}&=\sup\Big\{\abs{\braket{u,x^*\otimes y^*}}: x^*\in B_{X^*},y^*\in B_{Y^*}	\Big\},\\
\nonumber\|u\|_{X\otimes_\pi Y}&=\inf\Big\{\sum_{i=1}^N \|x_i\|_X\|y_i\|_Y: N\in \mathbb N \text{   }\text{ and }\text{  }u=\sum_{i=1}^N x_i\otimes y_i\Big\}.
\end{align}

We will use the notation $X\otimes_\epsilon Y$ and $X\otimes_\pi Y$ to refer to the space $X\otimes Y$ endowed with each of the previous norms. The $\epsilon$- and $\pi$-norm are dual to each other (in finite dimensional spaces): 

$$(X\otimes_\epsilon Y)^*=X^*\otimes_\pi Y^* \mbox{ and } (X\otimes_\pi Y)^*=X^*\otimes_\epsilon Y^* \hspace{0.5 cm}(isometrically).$$ 

It follows from the definitions that $\ell_1^N(X)=\ell_1^N\otimes_\pi X$ and also $\ell_\infty^N(X)=\ell_\infty^N\otimes_\epsilon X$. 

In particular, for a given $z=\{z(x,a,y,b)\}_{x,a,y,b}\in \R^N\otimes \R^K\otimes \R^N\otimes\R^K$, we can define:
\begin{align*}
\|z\|_{\ell_\infty^N(\ell_1^K(\ell_\infty^N(\ell_1^K)))}&=\max_x\sum_a\max_y\sum_b \abs{z(x,a,y,b)},\\
\|z\|_{\ell_1^N(\ell_\infty^K(\ell_1^N(\ell_\infty^K)))}&=\sum_x\max_a\sum_y\max_b\abs{z(x,a,y,b)}.
\end{align*}

For two isomorphic Banach spaces $X$ and $Y$ we can define the Banach-Mazur distance as $d(X,Y)=\inf \{\|T\|\|T^{-1}\|\text{ such that }T \text{ is an isomorphism from }X\text{ to }Y \}$\cite{tomczak-jaegermann}.

Finally, given two Banach spaces $X$ and $Y$ that are subspaces of some other Banach space $Z$, we can consider two more spaces $X\cap Y$ and $X+Y$ with the following norms: 
\begin{align}\label{sum-int}
\|x\|_{X\cap Y}&=\max\{\|x\|_X,\|x\|_Y\},\\
\nonumber \|x\|_{X+Y}&=\inf\{\|x_1\|_X+\|x_2\|_Y\text{ such that }x=x_1+x_2\}.
\end{align}

It is not hard to see  \cite[Chapter 2]{interpolation} that if $X\cap Y$ is dense in both $X$ and $Y$, then $(X\cap Y)^*=X^*+Y^*$. In the case we will be interested, $X$ and $Y$ will be  finite dimensional with the same dimension. Since all norms on a finite dimensional space are equivalent, we can consider $Z$ to be either $X$ or $Y$ and $X\cap Y$ will be not only dense in $X$ and $Y$, but actually will coincide (as a vector space) with both of them. 

The study of probability distributions with only two possible outputs, let us say $\{+1,-1\}$, is specially relevant. In that case $P\in\mathbb{R}^{4N^2}$, and it becomes interesting to work, not with the full probability distribution, but with the following correlations:
$$\gamma=(\gamma_{xy})_{x,y=1}^N\in\mathbb{R}^{N^2}\text{ where }\gamma_{xy}=\mathbb{E}[P(a\cdot b|x,y)]=P(a{\cdot}b{=}{1}|x,y)-P(a{\cdot}b{=}{-}1|x,y).$$

If the correlations are generated from classical probability distributions, then we call them classical correlations, and we denote its subset as $\mathcal{L}_c$. The non-signalling case will be denoted as $\mathcal{NS}_c$. It is known that a non-signalling distribution is uniquely determined by the expected correlations and the expected marginals, defined as $M_A(x)=\mathbb{E}[P(a|x)]$ and $M_B(y)=\mathbb{E}[P(b|y)]$ \cite[Proposition 1]{communicationcomplexityns}.

Correlations can be characterized in terms of tensor norms. It follows  easily from the definitions of $\epsilon$ and $\pi$ norm (see \cite[Proposition 1,2]{communicationcomplexityns})
 that:
\begin{align*}
&\gamma\in\mathcal{L}_c\text{   }\text{ if and only if }\text{   }\|\gamma\|_{\ell_\infty^N\otimes_\pi\ell_\infty^N}\leq1,\\
&\gamma\in\mathcal{NS}_c\text{   }\text{ if and only if }\text{   }\|\gamma\|_{\ell_\infty^N\otimes_\epsilon\ell_\infty^N}\leq1.
\end{align*}

In that case, if we define a correlation Bell inequality $T=(T_{i,j})_{i,j=1}^N$ as a linear functional acting on correlations, we have that, 

\begin{align}\label{Correlations NS}
LV_{\mathcal{NS}}(T)&=\frac{\sup_{\gamma\in\mathcal{NS}_c}\abs{\braket{T|\gamma}}}{\sup_{\gamma\in\mathcal{L}_c}\abs{\braket{T|\gamma}}}= \frac{\sup_{\|\gamma\|_{\ell_\infty^N\otimes_\epsilon\ell_\infty^N}\leq1}\abs{\braket{T|\gamma}}}{\sup_{\|\gamma\|_{\ell_\infty^N\otimes_\pi\ell_\infty^N}\leq1}\abs{\braket{T|\gamma}}}=\frac{\|T\|_{\ell_1^N\otimes_\pi\ell_1^N}}{\|T\|_{\ell_1^N\otimes_\epsilon\ell_1^N}}.
\end{align}

It is well known  that $\|\cdot\|_{\ell_1^N\otimes_\pi\ell_1^N}\leq \sqrt{2N} \|\cdot\|_{\ell_1^N\otimes_\epsilon\ell_1^N}$. At the same time, it is also known the existence of $u\in \ell_1^N\otimes \ell_1^N$ such that $\|u\|_{\ell_1^N\otimes_\pi\ell_1^N}\geq \sqrt{N/2}\|u\|_{\ell_1^N\otimes_\epsilon\ell_1^N}$  (see for instance \cite[Ex. 29]{LPW18} for both estimates).
Therefore, the largest non-signalling violation attainable in the correlation situation  cannot be larger than $O(\sqrt{N})$, and this order is attained. 

\section{The non-signalling norm}\label{Sec:NSnorm}

As we have seen, the relation between classical and non-signalling correlations is well understood, and this relation can be expressed and proved using the tensor norm language. We start now to follow this approach for the study of full probability distributions.

We begin by defining a suitable norm for  non-signalling probability distributions. In the following, given an element $P\in \mathbb R^{N^2K^2}=\mathbb R^{N}\otimes \mathbb R^{K}\otimes \mathbb R^{N}\otimes \mathbb R^{K}$, we will regard it as  $$P=\sum_{x,y;a,b=1}^{N,K}P(a,b,x,y)e_x\otimes e_a\otimes e_y\otimes e_b.$$

\begin{definition}\label{nsnorm}
Given $P\in \mathbb R^{N^2K^2}$, we define 
\begin{align*}
\|P\|_1=&\max_x\sum_a\max_y\sum_b\abs{P(a,b|x,y)},\\
\|P\|_2=&\max_y\sum_b\max_x\sum_a\abs{P(a,b|x,y)}.
\end{align*}

Moreover, the non-signalling norm of $P$ is defined as:
$$\|P\|_{\text{NS}}=\max\{\|P\|_1,\|P\|_2\}.$$
\end{definition}

It follows from the previous section that these three quantities are norms. Therefore, three different Banach spaces are defined: $BNS1_{NK}=(\mathbb{R}^{N^2K^2},\|\cdot\|_1)$, $BNS2_{NK}=(\mathbb{R}^{N^2K^2},\|\cdot\|_2)$ and $BNS_{NK}=(\mathbb{R}^{N^2K^2},\|\cdot\|_{\text{NS}})$. 

Recall from the introduction that, for  $P\in \R^N\otimes \R^K\otimes \R^N\otimes\R^K$, we define  $P^T=\text{flip}(P)$ where  $\text{flip}: \mathbb R^{N^2K^2}\longrightarrow \mathbb R^{N^2K^2}$ is the linear map  defined on elementary tensors by $\text{flip}(e_x\otimes e_a\otimes e_y\otimes e_b)=e_y\otimes e_b\otimes e_x\otimes e_a$. With that notation, $\|P^T\|_1=\|P\|_2$. Note that both $BNS1_{NK}$ and $BNS2_{NK}$ are isomorphic to $\ell_\infty^N(\ell_1^K(\ell_\infty^N(\ell_1^K)))$.

The following result shows that this norm indeed characterizes non-signalling distributions. 

\begin{theorem}\label{NSCB}
Let $P\in\mathbb{R}^{N^{2}K^{2}}$. Then $P\in\mathcal{NS}$ if and only if $P\in \mathcal C$  and $ \|P\|_\text{NS}=1$.
\end{theorem}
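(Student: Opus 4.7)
The statement is an ``iff'' so the plan splits into two implications, both of which should follow from elementary manipulations of the $\|\cdot\|_1$ and $\|\cdot\|_2$ expressions combined with the definition of $\mathcal C$.

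For the forward direction, I would assume $P\in\mathcal{NS}$. Membership in $\mathcal{C}$ is immediate since $\mathcal{NS}\subset\mathcal C$ by definition. For the norm, non-negativity of $P$ lets me drop the absolute values in Definition \ref{nsnorm}. The non-signalling condition (\ref{ns2}) says that $\sum_b P(a,b|x,y)$ does not depend on $y$, so the inner $\max_y$ in $\|P\|_1$ is redundant: $\max_y\sum_b P(a,b|x,y)=P_A(a|x)$, a well-defined marginal. Then $\sum_a P_A(a|x)=\sum_{a,b}P(a,b|x,y)=1$ for every $x$, giving $\|P\|_1=1$. A symmetric computation using (\ref{ns1}) yields $\|P\|_2=1$, hence $\|P\|_{\mathrm{NS}}=1$.

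For the reverse direction, I would assume $P\in\mathcal{C}$ and $\|P\|_{\mathrm{NS}}=1$, and aim to recover (\ref{ns1})--(\ref{ns2}). The key is a squeeze argument. Fix $x$ and any $y_0$. Using $P\ge 0$ and $\sum_{a,b}P(a,b|x,y_0)=1$,
\begin{equation*}
1=\sum_a\sum_b P(a,b|x,y_0)\le\sum_a\max_y\sum_b P(a,b|x,y)\le\|P\|_1\le 1,
\end{equation*}
so equality holds throughout. Since each summand in $\sum_a\bigl(\max_y\sum_b P(a,b|x,y)-\sum_b P(a,b|x,y_0)\bigr)$ is non-negative and the total is zero, each summand vanishes, i.e.\ $\sum_b P(a,b|x,y)=\sum_b P(a,b|x,y_0)$ for every $a,y$. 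This is precisely (\ref{ns2}). Applying the identical argument to $\|P\|_2$ yields (\ref{ns1}), so $P\in\mathcal{NS}$.

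There is no genuine obstacle here; the only thing to be careful about is to use $P\ge 0$ to identify $|P|$ with $P$ before comparing with the normalisation $\sum_{a,b}P(a,b|x,y)=1$, and to notice that the mixed expression $\sum_a\max_y\sum_b$ is sandwiched between $\max_y\sum_{a,b}=1$ and $\|P\|_1=1$, forcing the intermediate quantity to be constant in $y$ term by term in $a$. Once this chain is written down, both the computation of $\|\cdot\|_{\mathrm{NS}}$ on $\mathcal{NS}$ and the extraction of the non-signalling conditions from the norm bound fall out immediately.
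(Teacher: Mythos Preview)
Your proposal is correct. The forward direction is identical to the paper's argument. For the reverse direction, the paper argues by contrapositive: assuming $P\in\mathcal C$ but $P\notin\mathcal{NS}$, it picks indices $b_0,y_0,x_0,x_1$ where a marginal fails to be constant and shows directly that $\|P\|_2>1$ by choosing, for each $b$, the $x$ that maximises $\sum_a P(a,b|x,y_0)$. Your squeeze argument is the direct version of the same idea: the chain $1=\sum_{a,b}P(a,b|x,y_0)\le\sum_a\max_y\sum_b P(a,b|x,y)\le\|P\|_1\le 1$ forces equality term by term in $a$, which is exactly the non-signalling condition. The two arguments are logically equivalent and rest on the same observation, namely that the gap between $\sum_a\sum_b$ and $\sum_a\max_y\sum_b$ measures precisely the failure of the $y$-marginal to be constant; your presentation is arguably a bit cleaner since it avoids singling out specific indices.
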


\begin{proof}
Suppose $P\in \mathcal{NS}$. Since $\mathcal{NS}\subset \mathcal C$, $P$ is belongs to $\mathcal C$. Moreover,
\begin{align*}\max_{x=1,\cdots, N}\sum_{a=1}^{K}\max_{y=1,\cdots, N}\sum_{b=1}^{K}|P(a,b|x,y)|&=\max_{x=1,\cdots, N}\sum_{a=1}^{K}\max_{y=1,\cdots, N}P_1(a|x)\\&=\max_{x=1,\cdots, N}\sum_{a=1}^{K}P_1(a|x)=1.
\end{align*}
Therefore,  $\|P\|_\text{NS}=1$.

For the other implication, suppose $P\in \mathcal C$ and  $P\notin \mathcal{NS}$. Then, we  can suppose that $P$ does not fulfill condition (\ref{ns1}) (the other case being analogous) and therefore we can assume that there exist $b_0$, $y_0$, $x_0$ and $x_1$ such that $\sum_aP(a,b_{0}|x_{0},y_{0})>\sum_aP(a,b_{0}|x_{1},y_{0})$. Then,

\begin{align*}
	&\|P\|_{\text{NS}}\geq \max_{y=1,\cdots, N}\sum _{b=1}^{K}\max_{x=1,\cdots, N}\sum_{a=1}^{K}{P(a,b|x,y)} \geq \sum _{b=1}^{K}\max_{x=1,\cdots, N}\sum_{a=1}^{K}{P(a,b|x,y_{0})}\\
	&=\sum _{b\neq b_{0}}\max_{x=1,\cdots, N}\sum_{a=1}^{K}{P(a,b|x,y_{0})}+\max_{x=1,\cdots, N}\sum_{a=1}^{K}{P(a,b_{0}|x,y_{0})}\\
	&\geq \sum _{b\neq b_{0}}\sum_{a=1}^{K}{P(a,b|x_1,y_{0})}+\sum_{a=1}^{K}{P(a,b_{0}|x_{0},y_{0})}\\
	&>\sum _{b\neq b_{0}}\sum_{a=1}^{K}{P(a,b|x_1,y_{0})}+\sum_{a=1}^{K}{P(a,b_{0}|x_{1},y_{0})}\\
	&=\sum _{b=1}^{K}\sum_{a=1}^{K}{P(a,b|x_{1},y_{0})}=1
\end{align*}

Therefore,  $\|P\|_{\text{NS}}>1$, which is a contradiction. Hence, we conclude that $P\in \mathcal{NS}$.
\end{proof}

The following set, closely related to the non-signalling probability distributions, will be very useful for our reasonings. It was introduced in \cite{LW}. 
\begin{definition}\label{d:SNOS}
The set $\text{SNOS}\subset \mathbb R^{N^2K^2}$ consists of the non-negative elements $P(a,b|x,y)$ in $\mathbb R^{N^2K^2}$ such that, for every $1\leq x, y \leq K$, there exist $(Q_{1}(a|x))_{a=1}^N$ and $(Q_{2}(b|y))_{b=1}^N$ probability distributions verifying that, for every $x,y,a,b$, $\sum_a P(a,b|x,y)\leq Q_2(b|y)$ and $\sum_b P(a,b|x,y)\leq Q_1(a|x).$
\end{definition}
\begin{remark}\label{snosproperty}
Given $P\in\mathbb{R}^{N^2K^2}$ with non-negative entries, the condition of $P$ in $\text{SNOS}$ is equivalent to the existence of $\tilde{P}$ in $\mathcal{NS}$ such that $P(a,b|x,y)\leq\tilde{P}(a,b|x,y)$ for all $x,y,a,b$ (see \cite[Claim 1]{ito} for the non-trivial implication). In this case we use  the notation $P\leq\tilde{P}$.
\end{remark}

The next result will be needed later. 
\begin{proposition}\label{p:positiveSNOS}
Let $P\in\mathbb{R}^{N^{2}K^{2}}$ have non-negative entries. Then, $P\in\text{SNOS}$  if and only if $\|P\|_{\text{NS}}\leq1.$
\end{proposition}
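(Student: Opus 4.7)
The plan is to verify both implications by direct unpacking of the two definitions, using that $P$ is non-negative so that absolute values can be dropped throughout.

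For the easy direction, assume $P\in \text{SNOS}$. Fix $x$. By the defining inequality $\sum_b P(a,b|x,y)\leq Q_1(a|x)$, which holds for \emph{every} $y$, I can take the maximum over $y$ on the left and still have
\[
\max_y\sum_b P(a,b|x,y)\leq Q_1(a|x).
\]
Summing over $a$ and using $\sum_a Q_1(a|x)=1$ gives $\sum_a\max_y\sum_b P(a,b|x,y)\leq 1$. Taking the maximum over $x$ yields $\|P\|_1\leq 1$. The same argument applied to $Q_2$ gives $\|P\|_2\leq 1$, hence $\|P\|_{\text{NS}}\leq 1$.

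For the nontrivial direction, assume $\|P\|_{\text{NS}}\leq 1$. I will build the marginals $Q_1$ and $Q_2$ by hand from the quantities that appear inside the two norms. Define
\[
q_1(a|x):=\max_y\sum_b P(a,b|x,y),\qquad q_2(b|y):=\max_x\sum_a P(a,b|x,y).
\]
Both are non-negative, and the hypothesis $\|P\|_1\leq 1$ (respectively $\|P\|_2\leq 1$) gives $\sum_a q_1(a|x)\leq 1$ for every $x$ (respectively $\sum_b q_2(b|y)\leq 1$ for every $y$). To turn these sub-probability distributions into genuine probability distributions, I distribute the deficit arbitrarily, for instance setting
\[
Q_1(a|x):=q_1(a|x)+\delta_{a,1}\Bigl(1-\sum_{a'} q_1(a'|x)\Bigr),\quad Q_2(b|y):=q_2(b|y)+\delta_{b,1}\Bigl(1-\sum_{b'} q_2(b'|y)\Bigr).
\]
Then $Q_1(\cdot|x)$ and $Q_2(\cdot|y)$ are probability distributions, and by construction $Q_1(a|x)\geq q_1(a|x)\geq \sum_b P(a,b|x,y)$ for every $y$, and analogously for $Q_2$. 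Hence $P\in\text{SNOS}$.

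There is no real obstacle here; the content of the proposition is exactly that the $\max$--$\sum$--$\max$--$\sum$ structure of $\|\cdot\|_{\text{NS}}$ is designed so that it majorises precisely the data required to exhibit the dominating marginals in the definition of SNOS. The only thing to be careful about is that the dominating marginals $Q_1(\cdot|x)$ and $Q_2(\cdot|y)$ are required to be genuine probability distributions, not sub-probability distributions, which is why the small padding step in the definition of $Q_1$ and $Q_2$ is included.
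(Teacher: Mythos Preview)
Your proof is correct and follows essentially the same route as the paper: both directions are handled identically, defining the candidate marginals as $q_1(a|x)=\max_y\sum_b P(a,b|x,y)$ and $q_2(b|y)=\max_x\sum_a P(a,b|x,y)$ and then padding the deficit into a single output to obtain genuine probability distributions. The only differences are notational (the paper calls these $\tilde Q_1,\tilde Q_2$ and writes the padding as a case distinction rather than with a Kronecker delta).
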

\begin{proof}
Let $P\in\text{SNOS}$. For every $1\leq x,y \leq K$, let $(Q_1(a|x))_{a=1}^N$, $(Q_2(b|y))_{b=1}^N$ be as in Definition \ref{d:SNOS}.  Then $$\max_x\sum_a\max_y\sum_b P(a,b|x,y)\leq\max_x\sum_a\max_yQ_1(a|x)=\max_x\sum_aQ_1(a|x)=1,$$ $$\max_y\sum_b\max_x\sum_a P(a,b|x,y)\leq\max_y\sum_b\max_xQ_2(b|y)=\max_y\sum_bQ_2(b|y)=1.$$

Consequently, $\|P\|_{\text{NS}}\leq1$.

Conversely, if $\|P\|_\text{NS}\leq1$, define for all $y$ and $b$, $\max_x\sum_aP(a,b|x,y)=\tilde{Q}_2(b|y)$ and for all $x$ and $a$, $\max_y\sum_bP(a,b|x,y)=\tilde{Q}_1(a|x)$. Thus defined,  $\tilde{Q}_1$ and $\tilde{Q}_2$ need not be probability distributions. For this reason,  we define:
\[
   Q_1(a|x)= 
   \begin{cases} 
      1-\sum_{s\neq1}\tilde{Q}_1(s|x)			 		& \mbox{if } a=1   \\
      \tilde{Q}_1(a|x)								& \mbox{if } a\neq1
   \end{cases}
   \hspace{10mm}
   Q_2(b|y)= 
   \begin{cases} 
      1-\sum_{t\neq1}\tilde{Q}_2(t|y)			 		& \mbox{if } b=1   \\
      \tilde{Q}_2(b|y)								& \mbox{if } b\neq1
   \end{cases}
\]

It is easy to see now that $Q_1, Q_2$ guarantee that  $P$ belongs to $\text{SNOS}$. 
\end{proof}

\begin{remark}
It follows from Proposition \ref{p:positiveSNOS} that the set SNOS is convex and it has the same dimension as the ambient space, $N^2K^2$
\end{remark}

Since $\|\cdot\|_\text{NS}$ is a norm in $\mathbb{R}^{N^2K^2}$, we can consider its dual norm, which is defined by 
$$\|M\|_\text{DNS}=\sup_{\|P\|_\text{NS}\leq1}|\braket{M|P}|.$$

Moreover, because $BNS_{NK}=BNS1_{NK}\cap BNS2_{NK}$, then by (\ref{sum-int}) we can say that $BNS_{NK}^*=BNS1_{NK}^*+BNS2_{NK}^*$, where $BNS1_{NK}^*=(\mathbb{R}^{N^2K^2},\|\cdot\|_1^*)$ (and similarly for $BNS2_{NK}^*$) which allows us to write  that:
\begin{align}\label{dualnorm}
\|M\|_{\text{DNS}}=\inf\{\|M_1\|_1^*+\|M_2\|_2^*:M=M_1+M_2\}.
\end{align}

Note that, according to Section \ref{Sec:Preliminaries},  for a given $M\in \mathbb R^{N^2K^2}$, we have that $$\|M\|_1^*=\sum_x\max_a\sum_y\max_b |M(a,b|x,y)|, \text{    }\text{   and }\|M\|_2^*=\sum_y\max_b\sum_x\max_a |M(a,b|x,y)|.$$ Note also that it follows from (\ref{dualnorm}) that $$\|M\|_{\text{DNS}}\leq \min\{\|M_1\|_1^*,\|M_2\|_2^*\}.$$

The next result shows how the duality works for non-signalling distributions and pointwise non-negative functionals. Theorem \ref{NS value games} follows trivially from it.
\begin{proposition}\label{non-negativeentries}
Let $M\in\mathbb{R}^{N^{2}K^{2}}$ have non-negative entries. Then,
$$\sup_{P\in\text{SNOS}}{|\braket{M|P}}|=\sup_{P\in\text{NS}}{|\braket{M|P}}|=\sup_{\|P\|_\text{NS}=1}{|\braket{M|P}}|=\|M\|_\text{DNS}.$$
\end{proposition}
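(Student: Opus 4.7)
The plan is to chain four inequalities, using Proposition \ref{p:positiveSNOS} to pass between SNOS and the unit ball of $\|\cdot\|_{\text{NS}}$, and Remark \ref{snosproperty} to pass from SNOS to $\mathcal{NS}$. The non-negativity of $M$ is what makes everything collapse: it lets us replace $P$ by $|P|$ entrywise without decreasing $|\langle M, P\rangle|$, and lets us replace an SNOS element by a dominating NS element without decreasing $\langle M, P\rangle$.

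First I would dispose of the cheap equality $\sup_{\|P\|_{\text{NS}}=1}|\langle M,P\rangle|=\|M\|_{\text{DNS}}$, which is just homogeneity of the dual norm (the sup over the closed unit ball of a linear functional is attained on the boundary, assuming $M\neq 0$; the case $M=0$ is trivial). Next, observe that $\mathcal{NS}\subseteq\text{SNOS}$ (any $P\in\mathcal{NS}$ satisfies Definition \ref{d:SNOS} with $Q_1, Q_2$ equal to its own well-defined marginals), so $\sup_{P\in\mathcal{NS}}|\langle M,P\rangle|\le\sup_{P\in\text{SNOS}}|\langle M,P\rangle|$.

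For the reverse inequality between $\mathcal{NS}$ and SNOS, given $P\in\text{SNOS}$, Remark \ref{snosproperty} produces $\tilde P\in\mathcal{NS}$ with $P(a,b|x,y)\le\tilde P(a,b|x,y)$ entrywise. Because $M$ has non-negative entries and both $P$ and $\tilde P$ do as well, $|\langle M,P\rangle|=\langle M,P\rangle\le\langle M,\tilde P\rangle=|\langle M,\tilde P\rangle|$, giving $\sup_{P\in\text{SNOS}}|\langle M,P\rangle|\le\sup_{P\in\mathcal{NS}}|\langle M,P\rangle|$. So the first two suprema agree.

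Next I would show $\sup_{P\in\text{SNOS}}|\langle M,P\rangle|\le\|M\|_{\text{DNS}}$: by Proposition \ref{p:positiveSNOS}, every $P\in\text{SNOS}$ satisfies $\|P\|_{\text{NS}}\le 1$, so SNOS is contained in the closed unit ball of $\|\cdot\|_{\text{NS}}$ and the inequality is immediate. The only non-routine step—and the one I expect to be the main content—is the reverse: given an arbitrary $P$ with $\|P\|_{\text{NS}}\le 1$, I want to bound $|\langle M,P\rangle|$ by the SNOS supremum. The trick is to replace $P$ by its entrywise absolute value $|P|$. Since $M\ge 0$,
\[
|\langle M,P\rangle|=\Bigl|\sum_{x,y,a,b}M_{xy}^{ab}P(a,b|x,y)\Bigr|\le\sum_{x,y,a,b}M_{xy}^{ab}|P(a,b|x,y)|=\langle M,|P|\rangle.
\]
The definition of $\|\cdot\|_{\text{NS}}$ (Definition \ref{nsnorm}) already involves $|P(a,b|x,y)|$ inside every sum, so $\||P|\|_{\text{NS}}=\|P\|_{\text{NS}}\le 1$. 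Since $|P|$ has non-negative entries, Proposition \ref{p:positiveSNOS} places $|P|\in\text{SNOS}$, and hence $\langle M,|P|\rangle\le\sup_{P'\in\text{SNOS}}|\langle M,P'\rangle|$. Taking the sup over $P$ in the unit ball of $\|\cdot\|_{\text{NS}}$ closes the chain of inequalities and completes the proof.
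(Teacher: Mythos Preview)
Your proof is correct and follows essentially the same approach as the paper: both arguments hinge on the same two moves, namely using Remark \ref{snosproperty} (the entrywise domination $P\le\tilde P$ with $\tilde P\in\mathcal{NS}$) to pass from SNOS to $\mathcal{NS}$, and replacing a general $P$ in the unit ball by $|P|$ and invoking Proposition \ref{p:positiveSNOS} to land in SNOS. The paper organizes these into a single cyclic chain $\text{SNOS}\leq\mathcal{NS}\leq\{\|P\|_{\text{NS}}=1\}\leq\text{SNOS}$, while you break it into more pieces and add the (easy) observations $\mathcal{NS}\subseteq\text{SNOS}$ and $\text{SNOS}\subseteq B_{\|\cdot\|_{\text{NS}}}$, but the content is the same.
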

\begin{proof}
Take $P\in\text{SNOS}$. Then there exists $\tilde{P}\in\text{NS}$ such that $P(a,b|x,y)\leq\tilde{P}(a,b|x,y)$. Hence,
\begin{align*}
|\braket{M|P}|&=\braket{M|P}=\sum_{a,b,x,y}{M_{x,y}^{a,b}P(a,b|x,y)}\leq\sum_{a,b,x,y}{M_{x,y}^{a,b}\tilde{P}(a,b|x,y)}\\
&\leq\sup_{P\in\text{NS}}{|\braket{M|P}}|\leq\sup_{\|P\|_\text{NS}=1}{|\braket{M|P}}|.
\end{align*}

Take now $P$ such that $\|P\|_\text{NS}=1$. Then, 
\begin{align*}
|\braket{M|P}|&=\Big|\sum_{a,b,x,y}{M_{x,y}^{a,b}P(a,b|x,y)}\Big|\leq\sum_{a,b,x,y}{|M_{x,y}^{a,b}|\cdot|P(a,b|x,y)|}\\
&=\sum_{a,b,x,y}{M_{x,y}^{a,b}\cdot|P(a,b|x,y)|}\leq{\sup_{P\in\text{SNOS}}{|\braket{M|P}}|}.
\end{align*}
Here, the last inequality follows from the fact that the element $\tilde{P}(a,b|x,y)=|P(a,b|x,y)|$ belongs to  $\text{SNOS}$ by Proposition \ref{p:positiveSNOS}, because all its coefficients are non-negative and $\|\tilde{P}\|_\text{NS}=1$.
\end{proof}

If all of the entries of a Bell inequality $M$  are non negative, Proposition \ref{non-negativeentries} shows that $\|M\|_\text{DNS}=\sup_{P\in\text{NS}}\braket{M,P}$. This has great importance when dealing with games. In a 2P1R game, $M(a,b,x,y)=\pi(x,y)V(a,b,x,y)$ where $\pi(x,y)$ is a probability distribution over the queries and $V(a,b,x,y)\in\{0,1\}$ is the function of the verifier, who outputs 1 or 0 depending on whether they have or have not won the game. The value of a game $\omega(M)$ is the probability of winning the game when using the best strategy. This value is of course different according to the resources, classical or non-signalling, that one is allowed to use. 

Since games are particular cases of Bell  inequalities with non-negative coefficients, Proposition \ref{non-negativeentries} tells us that the  non-signalling value of a game fulfills $\omega_{\mathcal{NS}}(M)=\|M\|_\text{DNS}$.

We have seen how the non-signalling value of a game (more generally, of any functional $M$ with non-negative entries) is related to  $\ell_\infty^N(\ell_1^K(\ell_\infty^N(\ell_1^K)))$ via the Banach spaces $BNS1_{NK}$ and $BNS2_{NK}$. It is also known (\cite[Section 4]{PV-Survey}) that the classical value of a game $M$ verifies $$\omega_{\mathcal L}(M) =  \|M\|_{\ell_1^N(\ell_\infty^K)\otimes_\epsilon \ell_1^N(\ell_\infty^K)}.$$

Next we  upper bound the distance between the norms associated to the classical and non-signalling probability distributions in terms of the number of inputs and outputs. These bounds will allow us to prove  Theorem \ref{thm: games}. 

First we bound this distance  in terms of the number of inputs. 

\begin{proposition}\label{cota1}
For every $M\in \mathbb R^{N^2K^2}$, 
$$\|M\|_{\ell_1^N(\ell_\infty^K(\ell_1^N(\ell_\infty^K)))}\leq N \|M\|_{\ell_1^N(\ell_\infty^K)\otimes_\epsilon\ell_1^N(\ell_\infty^K)} .$$
\end{proposition}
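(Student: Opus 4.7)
The plan is to view $M$ as a tensor along the first $\ell_1^N$ coordinate and slice it across $x$. Concretely, for each $x \in \{1,\dots,N\}$ define the slice $M_x(a,y,b) = M(a,b|x,y)$, viewed as an element of $\ell_\infty^K \otimes \ell_1^N(\ell_\infty^K)$. Unfolding the definition of the $\ell_1^N(\cdot)$ norm on the outermost factor, I would rewrite the left-hand side as
$$\|M\|_{\ell_1^N(\ell_\infty^K(\ell_1^N(\ell_\infty^K)))} = \sum_{x=1}^N \|M_x\|_{\ell_\infty^K(\ell_1^N(\ell_\infty^K))}.$$
Using the standard isometric identification $\ell_\infty^K(Y) = \ell_\infty^K \otimes_\epsilon Y$ (which is recorded in Section~\ref{Sec:Preliminaries}), each term becomes the $\epsilon$-norm of the slice in $\ell_\infty^K \otimes_\epsilon \ell_1^N(\ell_\infty^K)$.

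Next I would establish the following general slicing inequality, which is the crux of the argument: for any finite-dimensional Banach spaces $X, Y$ and any tensor $T = \sum_x e_x \otimes T_x \in \ell_1^N(X) \otimes Y$, one has
$$\|T_{x_0}\|_{X \otimes_\epsilon Y} \leq \|T\|_{\ell_1^N(X) \otimes_\epsilon Y} \quad \text{for every } x_0.$$
This is easy: since $(\ell_1^N(X))^* = \ell_\infty^N(X^*)$, we can test the $\epsilon$-norm on the right by functionals of the form $u = e_{x_0} \otimes \phi$ with $\phi \in B_{X^*}$, for which $\|u\|_{\ell_\infty^N(X^*)} = \|\phi\|_{X^*} \leq 1$, and paired with any $v \in B_{Y^*}$ we recover precisely $\langle T_{x_0}, \phi \otimes v \rangle$.

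Applying this with $X = \ell_\infty^K$ and $Y = \ell_1^N(\ell_\infty^K)$ gives $\|M_x\|_{\ell_\infty^K \otimes_\epsilon \ell_1^N(\ell_\infty^K)} \leq \|M\|_{\ell_1^N(\ell_\infty^K) \otimes_\epsilon \ell_1^N(\ell_\infty^K)}$ for every $x$. Summing over the $N$ values of $x$ and using the trivial bound $\sum_x a_x \leq N \max_x a_x$ yields the desired inequality. The whole argument is essentially a duality/slicing calculation; no step looks like a serious obstacle, since the bookkeeping between the $\ell_1^N(\ell_\infty^K(\ell_1^N(\ell_\infty^K)))$ norm and the injective tensor norm is forced by the way injectivity interacts with $\ell_\infty$ sums and the definition of the $\ell_1^N$ sum. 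The only thing to be careful about is the isometric identification $\ell_\infty^K(Y) = \ell_\infty^K \otimes_\epsilon Y$, which should be invoked explicitly.
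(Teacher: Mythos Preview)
Your proof is correct and follows essentially the same route as the paper's. Both arguments unfold the left-hand side as $\sum_x \max_a \sum_y \max_b |M_{x,y}^{a,b}|$, bound the sum over $x$ by $N$ times a maximum, and then recognize the resulting per-slice quantity as dominated by the injective norm on the right; the paper does this last step by writing $N\max_{x,a}(\cdots) = N\|M\|_{\ell_\infty^{NK}\otimes_\epsilon \ell_1^N(\ell_\infty^K)}$ and invoking the ball inclusion $B_{\ell_1^{NK}}\subset B_{\ell_\infty^N(\ell_1^K)}$, whereas you package the same injectivity fact as the abstract slicing inequality $\|T_{x_0}\|_{X\otimes_\epsilon Y}\leq \|T\|_{\ell_1^N(X)\otimes_\epsilon Y}$.
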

\begin{proof}
To prove this bound consider the following:
\begin{align*}
\|M\|_{\ell_1^N(\ell_\infty^K(\ell_1^N(\ell_\infty^K)))}&=\sum_x\max_a\sum_y\max_b\abs{M_{x,y}^{a,b}}\leq N\max_{x,a}\sum_y\max_b\abs{M_{x,y}^{a,b}}\\
&=N\|M\|_{\ell_\infty^{NK}(\ell_1^N(\ell_\infty^K))}=N\|M\|_{\ell_\infty^{NK}\otimes_\epsilon\ell_1^N(\ell_\infty^K)}\\
&=N\sup_{u\in B_{\ell_1^{NK}},v\in B_{\ell_\infty^N(\ell_1^K)}}\abs{\braket{u\otimes v|M}}\\&\leq N\sup_{u\in B_{\ell_\infty^N(\ell_1^K)},v\in B_{\ell_\infty^N(\ell_1^K)}}\abs{\braket{u\otimes v|M}}\\
&=N\|M\|_{\ell_1^N(\ell_\infty^K)\otimes_\epsilon\ell_1^N(\ell_\infty^K)},
\end{align*}
where in the third equality we have used the identification $\ell_\infty(X)=\ell_\infty^N\otimes_\epsilon X$, in the fourth equality we have used the definition of the $\epsilon$ norm (\ref{Def_eps_pi}) and in the second inequality we have used the inclusion $B_{\ell_1^{NK}}\subset B_{\ell_\infty^N(\ell_1^{K})}$.
\end{proof}

Now we bound, in the positive case, the distance between those two same norms in terms of the number of outputs. 
\begin{proposition}\label{cota4}
Let $M\in \mathbb R^{N^2K^2}$ have non-negative entries, then 
$$\|M\|_{\ell_1^N(\ell_\infty^K(\ell_1^N(\ell_\infty^K)))}\leq K \|M\|_{\ell_1^N(\ell_\infty^K)\otimes_\epsilon\ell_1^N(\ell_\infty^K)} .$$
\end{proposition}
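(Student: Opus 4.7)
The plan is to mimic the structure of the proof of Proposition \ref{cota1}, but to exploit the non-negativity of $M$ in order to replace a maximum over $b$ by a normalized sum. The factor $K$ will come precisely from that normalization.

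First, using that the entries of $M$ are non-negative, I would note the pointwise bound $\max_b M_{x,y}^{a,b} \leq \sum_b M_{x,y}^{a,b}$, so that
\begin{align*}
\|M\|_{\ell_1^N(\ell_\infty^K(\ell_1^N(\ell_\infty^K)))} = \sum_x \max_a \sum_y \max_b M_{x,y}^{a,b} \leq \sum_x \max_a \sum_{y,b} M_{x,y}^{a,b}.
\end{align*}
This replaces the unmanageable ``$\max_b$ inside $\sum_y$'' by a plain sum over $(y,b)$, at no cost thanks to positivity.

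Next, I would estimate the right-hand side by choosing suitable test tensors in the definition of the $\epsilon$-norm. Recall that
\[
\|M\|_{\ell_1^N(\ell_\infty^K)\otimes_\epsilon \ell_1^N(\ell_\infty^K)} = \sup \bigl\{ |\langle M, u\otimes v\rangle| : u,v\in B_{\ell_\infty^N(\ell_1^K)}\bigr\}.
\]
For each $x$, pick $a(x)\in\arg\max_a \sum_{y,b} M_{x,y}^{a,b}$ and define $u_x^a = \delta_{a,a(x)}$, so that $\sum_a |u_x^a| = 1$ for every $x$ and hence $u\in B_{\ell_\infty^N(\ell_1^K)}$. Define $v$ by $v_y^b = 1/K$ for all $y,b$, which also satisfies $\sum_b |v_y^b| = 1$ and so lies in $B_{\ell_\infty^N(\ell_1^K)}$. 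Then
\begin{align*}
\langle M, u\otimes v\rangle = \sum_{x,a,y,b} M_{x,y}^{a,b}\, u_x^a v_y^b = \frac{1}{K}\sum_x \sum_{y,b} M_{x,y}^{a(x),b} = \frac{1}{K}\sum_x \max_a \sum_{y,b} M_{x,y}^{a,b}.
\end{align*}

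Combining the two displays gives
\[
\|M\|_{\ell_1^N(\ell_\infty^K(\ell_1^N(\ell_\infty^K)))} \leq K\,\langle M, u\otimes v\rangle \leq K\,\|M\|_{\ell_1^N(\ell_\infty^K)\otimes_\epsilon \ell_1^N(\ell_\infty^K)},
\]
as desired. There is no real obstacle here: the only delicate point is that the first inequality genuinely requires non-negativity of $M$ (otherwise $\max_b M_{x,y}^{a,b}$ need not be dominated by $\sum_b M_{x,y}^{a,b}$), which is exactly the hypothesis of the proposition and explains why this step cannot appear in the analogue for general tensors.
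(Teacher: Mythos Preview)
Your proof is correct and follows essentially the same approach as the paper: both start by using non-negativity to replace $\max_b$ by $\sum_b$, and then bound the resulting expression by the $\epsilon$-norm via a suitable choice of test tensors. Your explicit choice $v_y^b=1/K$ is exactly the normalized version of the paper's $v=\sum_{y,b}e_y\otimes e_b\in B_{\ell_\infty^{NK}}\subset K\,B_{\ell_\infty^N(\ell_1^K)}$, so the two arguments are really the same computation phrased slightly differently.
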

\begin{proof}
Consider an element $M\in\mathbb{R}^{N^2K^2}$ with non-negative coefficients, then,
\begin{align*}
\|M\|_{\ell_1^N(\ell_\infty^K(\ell_1^N(\ell_\infty^K)))}&=\sum_x\max_a\sum_y\max_b\abs{M_{x,y}^{a,b}}\leq\sum_x\max_a\Big(\sum_{y,b}M_{x,y}^{a,b}\Big)\\
&=\sup_{u\in B_{\ell_\infty^N(\ell_1^K)}}\Big\langle u\Big|\sum_{y,b}M_{x,y}^{a,b}e_x\otimes e_a\Big\rangle.
\end{align*}

By the non-negativity of $M$, we may assume that $u$ is also pointwise non-negative, and we have $$\sup_{u\in B_{\ell_\infty^N(\ell_1^K)}}\Big\langle u\Big|\sum_{y,b}M_{x,y}^{a,b}e_x\otimes e_a\Big\rangle=\sup_{u\in B_{\ell_\infty^N(\ell_1^K)}, v\in B_{\ell_\infty^{NK}}}\braket{u\otimes v|M},$$
and this last supremun is attained when $v=\sum_{y,b}e_y\otimes e_b$. 

It can be checked that  $B_{\ell_\infty^{NK}}\subset K B_{\ell_\infty^N(\ell_1^K)}$. Then, we have 
\begin{align*}
\|M\|_{\ell_1^N(\ell_\infty^K(\ell_1^N(\ell_\infty^K)))}&\leq \sup_{u\in B_{\ell_\infty^N(\ell_1^K)}, v\in B_{\ell_\infty^{NK}}}\braket{u\otimes v|M}\leq \\ &\leq \sup_{v\in B_{\ell_\infty^N(\ell_1^K)},u\in B_{\ell_\infty^N(\ell_1^K)}}K\abs{\braket{u\otimes v|M}}\leq \\ &\leq K\|M\|_{\ell_1^N(\ell_\infty^K)\otimes_\epsilon\ell_1^N(\ell_\infty^K)}.
\end{align*}
\end{proof}
\begin{remark}
Note that Proposition \ref{cota1} is stated for general functionals $M$. However, Proposition \ref{cota4} requires that the element $M$ is (pointwise) non-negative. It can be seen that Proposition \ref{cota4} fails for general elements (in fact, this follows from the comments right after Equation (\ref{Correlations NS})).
\end{remark}
\begin{remark}\label{remark optimality bounds positive}
Bounds in Proposition \ref{cota1} and Proposition \ref{cota4} are sharp, as it can be checked by using the element  \[
   M(x,y,a,b)= 
   \begin{cases} 
      1		 		& \mbox{if } x=b \hspace{0.2 cm}\mbox{ and  }\hspace{0.2 cm} y=a=1,   \\
     0								& \mbox{otherwise}, 
   \end{cases}
\]where $x,y,a,b=1,\cdots, N$. Indeed, in this case we have that $N=K$, $\|M\|_{\ell_1^N(\ell_\infty^N(\ell_1^N(\ell_\infty^N)))}=N$ and $\|M\|_{\ell_1^N(\ell_\infty^N)\otimes_\epsilon\ell_1^N(\ell_\infty^N)}=1$. 
\end{remark}
\begin{proof}[Proof of Theorem \ref{thm: games}]
Since we are considering a Bell inequality with non-negative coefficients, Proposition \ref{non-negativeentries} states that  $\sup_{P\in\mathcal{NS}}\braket{M|P}=\|M\|_\text{DNS}$. At the same time, we have  $\sup_{P\in\mathcal{L}}\braket{M|P}=\|M\|_{\ell_1^N(\ell_\infty^K)\otimes_\epsilon\ell_1^N(\ell_\infty^K)}$. In addition, we have that $\|M\|_\text{DNS}\leq\min\{\|M\|_1^*,\|M\|_2^*\}$.

Now, Applying Proposition $\ref{cota1}$ and Proposition \ref{cota4}, we obtain
$$\|M\|_i^*\leq\min\{N,K\}\|M\|_{\ell_1^N(\ell_\infty^K)\otimes_\epsilon\ell_1^N(\ell_\infty^K)} \hspace{0.2 cm}(i=1,2).$$

Putting both inequalities together we get: 
\begin{align*}
LV_{\mathcal{NS}}(M)&=\frac{\sup_{P\in\mathcal{NS}}\braket{M|P}}{\sup_{P\in\mathcal{L}}\braket{M|P}}=\frac{\|M\|_\text{DNS}}{\|M\|_{\ell_1^N(\ell_\infty^K)\otimes_\epsilon\ell_1^N(\ell_\infty^K)}}\leq\frac{\min\{\|M\|_1^*,\|M\|_2^*\|\}}{\|M\|_{\ell_1^N(\ell_\infty^K)\otimes_\epsilon\ell_1^N(\ell_\infty^K)}}\\&\leq\min\{N,K\}.
\end{align*}
\end{proof}

\begin{remark}
Obvious modifications of these proofs show that if we distinguish the inputs and outputs for Alice and Bob as $N_1$, $N_2$, $K_1$ and $K_2$, then one has the following bound for pointwise non-negative elements $M$: 
$$LV_{\mathcal{NS}}(M)\leq\min\{N_1,N_2,K_1,K_2\}.$$
\end{remark}
\begin{remark}
One could wonder whether the element in Remark \ref{remark optimality bounds positive} can be used to give an optimal ratio $LV_{\mathcal{NS}}(M)=\|M\|_\text{DNS}/\|M\|_{\ell_1^N(\ell_\infty^K)\otimes_\epsilon\ell_1^N(\ell_\infty^K)}$. However, for that element $M$ it is easy to see that $\|M\|_2^*=\|\text{flip}(M)\|_{\ell_1^N(\ell_\infty^N(\ell_1^N(\ell_\infty^N)))}=1$. Hence, $\|M\|_{\text{DNS}}\leq 1$ and the ratio in this case would not be greater than one.
\end{remark}
\section{Optimal lower bounds}\label{Sec:Game}

In this section we show that the upper bounds obtained in the previous section are essentially optimal. To do so, we consider a random family of games and show that, with high probability, the games in our family attain the  upper bounds in Section \ref{Sec:NSnorm}, up to a logarithmic factor.

Consider a family of elements $\{\sigma_{xy}\}_{x,y=1}^{N}$ where $\sigma_{xy}$ is in $S_K$, the symmetric group over $[N]$; that is, the group of permutations of the inputs. For every such family we  define the linear functional with non negative entries:
$$M=\sum_{x,y=1}^{N}\sum_{j=1}^{K} e_x\otimes e_j\otimes e_y\otimes e_{\sigma_{xy}(j)}.$$

For the interested reader, we remark that, properly normalized, $M$  can be seen as a unique game \cite{FL92}, with the uniform distribution on the inputs $(x,y)$ and the  verifier function defined as 1 if and only if $b=\sigma_{xy}(a)$, and 0 otherwise. We will not explicitly use this fact, though. 

\

\textbf{NS value of $M$:}

We prove next that $\|M\|_{\text{DNS}}=N^2$. We consider the following strategy:
$$P=\frac{1}{K}\sum_{x,y=1}^N\sum_{j=1}^K e_x\otimes e_j\otimes e_y\otimes e_{\sigma_{xy}(j)}.$$ 

It can be seen that it is a non-signalling probability distribution because $\|P\|_{\text{NS}}=1$ and all of its entries are positive. Then if we consider the value of  $M$ acting on $P$ we obtain:

\begin{align*}
\braket{M|P}&=\Big\langle \sum_{x,y}\sum_j e_x\otimes e_j\otimes e_y\otimes e_{\sigma_{xy}(j)}\Big|\frac{1}{K}\sum_{x',y'}\sum_{j'} e_{x'}\otimes e_{j'}\otimes e_{y'}\otimes e_{\sigma_{x'y'}(j')}\Big\rangle\\
&=\sum_{x,x',y,y',j,j'}\frac{1}{K}\braket{e_x|e_{x'}}\braket{e_j|e_{j'}}\braket{e_y|e_{y'}}\braket{e_{\sigma_{xy}(j)}|e_{\sigma_{x'y'}(j')}}=\sum_{x,y,j}\frac{1}{K}=N^2.
\end{align*}

Therefore we have that  $\|M\|_{\text{DNS}}\geq N^2$. At the same time, it is easy to see that $\langle M | P\rangle \leq N^2$ for every  $P\in \mathcal C$ (even a signalling one). Hence $\|M\|_\text{DNS}=N^2$.

\

\textbf{Classical value of $M$:}

We study now the classical value of $M$. As $\mathcal{L}$ is a convex polytope \cite{Tsirelson} and $M$ is a convex (in fact, linear) function acting on $\mathcal{L}$, applying convexity arguments it is clear that we only need to consider classical extremal strategies. A classical extremal strategy $P$ is uniquely determined by two functions $a,b:\{1,\ldots,N\}\longrightarrow\{1,\ldots,K\}$ in such a way that:
$$P=\sum_{x,y}e_x\otimes e_{a(x)}\otimes e_y\otimes e_{b(y)}.$$

Then, $M$ acting on $P$ verifies:
\begin{align*}
\braket{M|P}&=\Big\langle\sum_{x,y}\sum_j e_x\otimes e_j\otimes e_y\otimes e_{\sigma_{xy}(j)}\Big|\sum_{x',y'}e_{x'}\otimes e_{a(x')}\otimes e_{y'}\otimes e_{b(y')}\Big\rangle\\
&=\sum_{x,x',y,y',j}\braket{e_x|e_{x'}}\braket{e_j|e_{a(x')}}\braket{e_y|e_{y'}}\braket{e_{\sigma_{xy}(j)}|e_{b(y')}}=\sum_{x,y}\braket{e_j|e_{a(x)}}\braket{e_{\sigma_{xy}(j)}|e_{b(y)}}\\
&=\sum_{x,y}\braket{e_{\sigma_{xy}(a(x))}|e_{b(y)}}.
\end{align*}

We apply now probabilistic reasonings. For every $1\leq x,y\leq N$, we consider the  permutation $\sigma_{xy}$ to be a random variable uniformly distributed in $S_K$. For $(x,y)\not = (x',y')$ we consider $\sigma_{xy}$ and $\sigma_{x'y'}$ to be independent random variables. That is, $M$ is a random variable in the probability space $\Omega:= (S_K)^{\otimes N^2}$, considered with the uniform probability.

We fix a  classical extremal strategy $P$ characterized by  functions $a,b:\{1,\ldots,N\}\longrightarrow\{1,\ldots,K\}$ as above. For one such $P$ and for every pair of inputs $x$ and $y$, we can define a random variable $Z^P_{x,y}:S_K\rightarrow \{0,1\}$ by 
\begin{align*}
  Z_{x,y}^P= \braket{e_{\sigma_{x,y}(a(x))}|e_{b(y)}}.
\end{align*}

Recall that the superindex $P$ makes reference to the extremal probability distribution, which uniquely determines the functions $a$ and $b$. This random variable takes the following values:
\[ Z_{x,y}^P=\begin{cases} 
      1 & \text{if } \sigma_{x,y}(a(x))=b(y),\\
      0 & \text{if } \sigma_{x,y}(a(x))\neq b(y).
   \end{cases}
\]

Clearly,  the probability of $\sigma_{x,y}(a(x))=b(y)$ is $1/K$. Therefore $Z_{x,y}^P$ is a Bernoulli  variable of parameter $1/K$.

We recall the  following Chernoff-type bound \cite{chernoff}: 
\begin{theorem}\label{t:Chernoff}
Let $X_1$, $X_2$, $\ldots$, $X_n$ be independent $0$-$1$ random variables with $\mathbb{P}\left[X_i=1\right]=p_i$. Denote $X=\sum_{i=1}^nX_i$ and $\mu=\mathbb{E}[X]$. Then for all $\delta>1$:
$$\mathbb{P}\left[X\geq(1+\delta)\mu\right]\leq\exp\Big(\frac{-\delta^2\mu}{2+\delta}\Big).$$
\end{theorem}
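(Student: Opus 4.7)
The plan is to use the standard exponential (Chernoff--Markov) technique: bound the probability in terms of a moment generating function, exploit independence to factor, and then optimize the free parameter. More concretely, for any $t>0$ I would begin with Markov's inequality applied to $e^{tX}$:
$$\mathbb{P}[X\geq (1+\delta)\mu]=\mathbb{P}[e^{tX}\geq e^{t(1+\delta)\mu}]\leq e^{-t(1+\delta)\mu}\,\mathbb{E}[e^{tX}].$$

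By independence of the $X_i$, the moment generating function factors as $\mathbb{E}[e^{tX}]=\prod_{i=1}^n\mathbb{E}[e^{tX_i}]$, and since each $X_i$ is Bernoulli of parameter $p_i$ we have $\mathbb{E}[e^{tX_i}]=1+p_i(e^t-1)$. Applying the elementary inequality $1+x\leq e^x$ with $x=p_i(e^t-1)\geq 0$ and multiplying gives
$$\mathbb{E}[e^{tX}]\leq\prod_{i=1}^n e^{p_i(e^t-1)}=e^{\mu(e^t-1)}.$$
Plugging this back and choosing $t=\ln(1+\delta)$ (which minimizes the right-hand side by a routine differentiation), I obtain
$$\mathbb{P}[X\geq(1+\delta)\mu]\leq\left(\frac{e^\delta}{(1+\delta)^{1+\delta}}\right)^{\!\mu}=\exp\!\bigl(\mu\bigl(\delta-(1+\delta)\ln(1+\delta)\bigr)\bigr).$$

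It remains to compare this with the claimed bound, which reduces to verifying the purely analytic inequality
$$(1+\delta)\ln(1+\delta)-\delta\;\geq\;\frac{\delta^2}{2+\delta}\qquad\text{for all }\delta>0.$$
This is the main (and only nontrivial) obstacle. I would treat it by setting $f(\delta):=(1+\delta)\ln(1+\delta)-\delta-\frac{\delta^2}{2+\delta}$, noting $f(0)=0$, and showing $f'(\delta)\geq 0$ on $(0,\infty)$. A direct computation gives $f'(\delta)=\ln(1+\delta)-\frac{\delta(4+\delta)}{(2+\delta)^2}$; one checks $f'(0)=0$ and then either differentiates once more to see $f''(\delta)\geq 0$ on $(0,\infty)$, or applies the series expansion $\ln(1+\delta)=\sum_{k\geq 1}(-1)^{k+1}\delta^k/k$ together with a term-by-term comparison against the rational function $\delta(4+\delta)/(2+\delta)^2$. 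Combining this calculus fact with the exponential bound above yields exactly $\mathbb{P}[X\geq(1+\delta)\mu]\leq\exp\bigl(-\delta^2\mu/(2+\delta)\bigr)$, which is the desired conclusion; no use of the hypothesis $\delta>1$ is needed beyond $\delta>0$, so the theorem holds \emph{a fortiori} in the stated range.
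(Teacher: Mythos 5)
Your proof is correct and is the standard Chernoff--Markov derivation; the paper itself does not prove this statement but simply cites it (to Goemans's lecture notes), so there is no paper proof to compare against. For completeness: the analytic inequality you reduce to, $f(\delta):=(1+\delta)\ln(1+\delta)-\delta-\frac{\delta^2}{2+\delta}\geq 0$ for $\delta\geq 0$, does follow from your outline, since $f(0)=f'(0)=0$ and $f''(\delta)=\frac{1}{1+\delta}-\frac{8}{(2+\delta)^3}=\frac{(2+\delta)^3-8(1+\delta)}{(1+\delta)(2+\delta)^3}=\frac{\delta^3+6\delta^2+4\delta}{(1+\delta)(2+\delta)^3}\geq 0$; and you are right that this gives the bound for all $\delta>0$, so the hypothesis $\delta>1$ in the statement is not actually needed.
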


We define a  new random variable  $$Z^P=\sum_{x,y=1}^N Z_{x,y}^P= \sum_{x,y=1}^N\braket{e_{\sigma_{xy}(a(x))}|e_{b(y)}}.$$

Clearly, if $(x,y)\neq (x',y')$, then $Z_{x,y}^P$ and $Z_{x',y'}^P$ are independent. It is easy to see that $\mathbb{E}\left[Z^P\right]=N^2/K$; we can choose then  $N=K$ and apply Theorem \ref{t:Chernoff} to obtain:
$$\mathbb{P}\left[Z^P\geq(1+\delta)N\right]\leq\exp\Big(\frac{-\delta^2 N}{2+\delta}\Big).$$

There are $N^N$ different possibilities for the $a$ function and also for the $b$ function. That means, there are in total $N^{2N}$ different classical extremal strategies, which we  label as $P_i$ for $i=1,\ldots,N^{2N}$.

Now we apply the union bound and we obtain
\begin{align*}
\mathbb{P}\left[ \bigcup_{i=1}^{N^{2N}}\left(Z^{P_i}\geq(1+\delta)N\right)\right]&\leq\sum_{i=1}^{N^{2N}}\mathbb{P}\left[Z^{P_i}\geq(1+\delta)N\right]\leq N^{2N}\exp\Big(\frac{-\delta^2 N}{2+\delta}\Big)\\
&=\exp\Big(\log N^{2N}\Big)\exp\Big(\frac{-\delta^2 N}{2+\delta}\Big)=\exp\Big(2N\log N-\frac{\delta^2 N}{2+\delta}\Big)
\end{align*}

Choosing  $\delta=3\log N-2$, we have
\begin{align*}
\exp\Big(2N\log N-\frac{\delta^2 N}{2+\delta}\Big)&=\exp\Big(2N\log N-\frac{(3\log N)^2-12\log N+4}{3\log N}N\Big)\\
&=\exp\Big(-N(\log N-4)-\frac{4N}{3\log N}\Big)<1,
\end{align*}
for $N\geq 5$. 
Therefore
$$\mathbb{P}\left[\Bigg(\bigcup_{i=1}^{N^{2N}}\left(Z^{P_i}\geq(3\log N-1)N\right)\Bigg)^c\right]=\mathbb{P}\left[\bigcap_{i=1}^{N^{2N}}\left(Z^{P_i}<(3\log N-1)N\right)\right]>0,$$ 
for $N\geq 5$.

Hence, we know the existence of a family of  $N^2$ permutations, $(\sigma_{x,y})_{x,y=1}^N$ defining a linear functional $M$, such that $\|M\|_\text{DNS}=N^2$ and $\|M\|_{\ell_1^N(\ell_\infty^N)\otimes_\epsilon\ell_1^N(\ell_\infty^N)}\leq (3\log N-1)N$. This concludes the analysis of the classical bound.
\begin{remark}
The same result can be obtained with a  more restrictive type of games, the XOR-d games considered in \cite{BS15}. Alice and Bob receive questions $(x,y)$ from $X\times Y$ and reply with answers $a,b\in (\mathbb{Z}_N,+)$ where $\mathbb{Z}_N$ is the cyclic group of $N$ elements with inner operation $+$. The winning constraint is now $a+b=\sigma_{x,y}$ for some function $\sigma:X\times Y\rightarrow \mathbb{Z}_N$, $\sigma(x,y)=\sigma_{x,y}$. Choosing  $\sigma_{xy}$ uniformly and independently (in $(x,y)$)  we obtain the same bounds.  
\end{remark}
\section{A tensor norm description of the non-signalling set}\label{Sec: General Bell}

As we have seen, we can embed $\mathcal{NS}$ into $\mathbb{R}^{N^2K^2}$ and consider the non-signalling norm in this space. We have already explained that this  procedure suits perfect to relate norms and values (local, non-signalling) of  Bell inequalities with non-negative coefficients. But if we consider  general Bell functionals, with coefficients not necessarily non-negative, then the relation between  $\omega_{NS}(M)$ and  $\|M\|_\text{DNS}$ is not so clear anymore (it can be easily checked that $\omega_{NS}(\cdot)$ is not a norm in $\mathbb R^{N^2K^2}$). 

In order to understand this situation, we follow an approach similar to what was done in \cite{JungeP11low}. In that paper, in order to study quantum violation of general Bell inequalities, the authors introduced an auxiliary Banach space $NSG(N,K)$ defined as the linear space
\begin{align*}
NSG(N,K)=\{\{R(x|a)\}_{x,a=1}^{N,K}\in \R^{NK}: \sum_{a=1}^KR(x|a)={\rm constant}\in \R  \text{   for every   } x\},
\end{align*}endowed with the norm 
\begin{align*}
||R||_{NSG(N,K)}=\inf \{|\lambda|+|\mu|:R=\lambda P+\mu Q: P,Q\in S(N,K)\},
\end{align*} where
\begin{align*}
S(N,K)=\{\{P(x|a)\}_{x,a=1}^{N,K}: P(x|a)\geq 0 \text{   for every   } x,a   \text{   and   } \sum_{a=1}^KP(x|a)=1  \text{   for every   } x\}.
\end{align*}

The following result was proved in \cite{JungeP11low}.
\begin{theorem}\label{Thm JP}
The following relations hold: $co(\mathcal{L}\cup-\mathcal{L})=B_{NSG\otimes_\pi NSG}$ and $NSG\otimes_\pi NSG$ is isomorphic to $(\ell_{\infty}^N(\ell_1^{K-1})\oplus_\infty\mathbb{R})\otimes_\pi(\ell_{\infty}^N(\ell_1^{K-1})\oplus_\infty\mathbb{R})$. Moreover, the Banach-Mazur distance between these two space is less or equal than 9, independently of the dimension.
\end{theorem}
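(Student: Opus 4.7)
The plan is to treat the three assertions in order; everything flows from identifying $NSG(N,K)$ as a space built on $S(N,K)\cup -S(N,K)$, together with standard properties of the projective tensor norm.

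For the first assertion, the key observation is $B_{NSG(N,K)}=\operatorname{co}(S(N,K)\cup -S(N,K))$. From the definition $\|R\|_{NSG}=\inf\{|\lambda|+|\mu|:R=\lambda P+\mu Q,\,P,Q\in S(N,K)\}$ this is routine: for $\supseteq$ one reads off the coefficients of a convex combination of $\pm P,\pm Q$; for $\subseteq$ one absorbs any slack $1-|\lambda|-|\mu|$ using the padding $0=\tfrac{1}{2}P'-\tfrac{1}{2}P'$ for an arbitrary $P'\in S(N,K)$. Then, since $B_{X\otimes_\pi Y}=\operatorname{co}(B_X\otimes B_Y)$ in finite dimensions, bilinearity reduces $B_{NSG\otimes_\pi NSG}$ to $\operatorname{co}\{\pm P\otimes Q:P,Q\in S(N,K)\}$. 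On the other hand, the integral definition of $\mathcal L$ combined with Caratheodory's theorem expresses $\mathcal L$ as the convex hull of products $\{P\otimes Q:P,Q\in S(N,K)\}$, so $\operatorname{co}(\mathcal L\cup -\mathcal L)$ coincides with that ball.

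For the second and third assertions I would introduce the linear isomorphism $T\colon NSG(N,K)\to \ell_\infty^N(\ell_1^{K-1})\oplus_\infty \R$ given by $T(R)=\bigl((R(x|a))_{x,a=1}^{N,K-1},c\bigr)$, where $c=\sum_a R(x|a)$ is the common row-sum; the missing entry is recovered as $R(x|K)=c-\sum_{a<K}R(x|a)$. That $\|T\|\leq 1$ is immediate: any admissible decomposition $R=\lambda P+\mu Q$ forces $c=\lambda+\mu$ and $\sum_{a=1}^{K-1}|R(x|a)|\leq\sum_a(|\lambda|P(x|a)+|\mu|Q(x|a))=|\lambda|+|\mu|$, so both coordinates of $T(R)$ are bounded by $\|R\|_{NSG}$.

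The real technical step is the reverse estimate $\|T^{-1}\|\leq 3$. Given $R$ with $|c|\leq 1$ and $\max_x\sum_{a=1}^{K-1}|R(x|a)|\leq 1$, the triangle inequality applied to $R(x|K)=c-\sum_{a<K}R(x|a)$ yields $\sum_{a=1}^K|R(x|a)|\leq 3$. Split pointwise $R=R^+-R^-$ into positive and negative parts and set $r^\pm_x=\sum_a R^\pm(x|a)$; then $r^+_x-r^-_x=c$, $r^+_x+r^-_x\leq 3$, and hence $r^+_x\leq (3+c)/2$. To write $R$ as a combination of elements of $S(N,K)$ one must equalise the $x$-dependent row-sums of $R^+$ and $R^-$: with $\alpha=\max_x r^+_x$ and $\delta_x=\alpha-r^+_x\geq 0$, pad both $R^+$ and $R^-$ by $\delta_x/K$ on every output. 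The paddings cancel in the difference so $R$ is unchanged, while the modified tensors are non-negative with constant row-sums $\alpha$ and $\alpha-c$ respectively; normalising gives $R=\alpha P-(\alpha-c)Q$ with $P,Q\in S(N,K)$, and therefore $\|R\|_{NSG}\leq 2\alpha-c\leq 3$. The case $c<0$ is handled symmetrically by equalising the row-sums of $R^-$ instead. I expect this compensation step to be the main obstacle, since the padding must simultaneously be non-negative, absorb all the $x$-dependence, and cancel in $R^+-R^-$. The final distance bound $d\bigl(NSG\otimes_\pi NSG,(\ell_\infty^N(\ell_1^{K-1})\oplus_\infty\R)\otimes_\pi(\ell_\infty^N(\ell_1^{K-1})\oplus_\infty\R)\bigr)\leq 9$ is then immediate from the general multiplicativity of the Banach--Mazur distance under $\otimes_\pi$: the map $T\otimes T$ has projective norm at most $\|T\|^2=1$ and its inverse at most $\|T^{-1}\|^2\leq 9$.
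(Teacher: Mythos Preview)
Your argument is correct. Note, however, that the paper does not give its own proof of this theorem: the result is cited from \cite{JungeP11low}, and in its place the paper proves only the simpler Lemma~\ref{NSG-L1}, stating that $\|R\|_{NSG(N,K)}=\max_x\sum_a|R(x|a)|$ for $R\in NSG(N,K)$, which suffices for the later applications in Section~\ref{Sec: General Bell}.

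Your bound $\|T^{-1}\|\le 3$ is in fact a close cousin of the paper's proof of that lemma: both split $R$ into its positive and negative parts and then equalise the $x$-dependent row-sums by padding one distinguished output. The paper's lemma is formally sharper --- it yields an isometric embedding of $NSG(N,K)$ into $\ell_\infty^N(\ell_1^K)$ --- but, as the paper observes in Remark~\ref{Norm-complementation}, that isometry does not survive under $\otimes_\pi$ because the projective norm fails to be injective; this is precisely why an explicit isomorphism onto the full space $\ell_\infty^N(\ell_1^{K-1})\oplus_\infty\R$, of the kind you construct, is needed to obtain a Banach--Mazur bound on the tensor product. Your route via the factor-level estimate $\|T\|\,\|T^{-1}\|\le 3$ followed by the metric mapping property of $\otimes_\pi$ is the natural one and matches the structure of the cited result.
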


Here, proving that $co(\mathcal{L}\cup-\mathcal{L})=B_{NSG\otimes_\pi NSG}$ is very easy from the definition of $NSG$ and the norm $||\cdot||_{NSG(N,K)}$ (see \cite[Lemma 14, part a)]{JungeP11low}). Instead of using the rest of the estimates in the previous theorem, we will prove the following lemma, which will be enough for our purpose in this work.

\begin{lemma}\label{NSG-L1}
Let $R\in NSG(N,K)$. Then, $$||R||_{NSG(N,K)}=\max_{x=1,\cdots, N}\sum_{a=1}^K|R(x|a)|.$$
\end{lemma}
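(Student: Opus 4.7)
The plan is to establish both inequalities after a convenient reformulation of the infimum. Write $\|R\|_\infty:=\max_x\sum_a|R(x|a)|$. By grouping same-sign terms into convex combinations of $P$ and $Q$, one checks that any decomposition $R=\alpha P+\beta Q$ with $P,Q\in S(N,K)$ can be rewritten as $R=\lambda P'-\mu Q'$ with $\lambda,\mu\ge 0$, $P',Q'\in S(N,K)$, and the same cost $|\alpha|+|\beta|=\lambda+\mu$. Hence the infimum defining $\|R\|_{NSG(N,K)}$ equals
\[
\inf\bigl\{\lambda+\mu:\lambda,\mu\ge 0,\ R=\lambda P-\mu Q,\ P,Q\in S(N,K)\bigr\},
\]
and I will work with this form throughout.

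The inequality $\|R\|_\infty\le\|R\|_{NSG(N,K)}$ is immediate from the triangle inequality: for such a decomposition, $|R(x|a)|\le\lambda P(x|a)+\mu Q(x|a)$, which sums in $a$ to $\sum_a|R(x|a)|\le\lambda+\mu$ for every $x$; taking the max in $x$ and the infimum in the decomposition concludes.

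For the reverse inequality I would exhibit an explicit optimiser. Writing $c:=\sum_a R(x|a)$ (constant in $x$ by definition of $NSG(N,K)$), $R=R^+-R^-$ for the pointwise positive/negative part decomposition, and $B(x):=\sum_a R^-(x|a)$, the identity $\sum_a R^+(x|a)=c+B(x)$ yields $\sum_a|R(x|a)|=c+2B(x)$ and hence $\|R\|_\infty=c+2\max_x B(x)$. Set $\mu:=\max_x B(x)$ and $\lambda:=\mu+c$; both are non-negative, since if $c<0$ then $c+B(x)=\sum_a R^+(x|a)\ge 0$ forces $B(x)\ge -c$ for every $x$, so $\mu\ge -c$. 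Define $Q\in S(N,K)$ by $Q(x|\cdot):=\mu^{-1}R^-(x|\cdot)$ compensated by the non-negative deficit $1-B(x)/\mu$ on any fixed coordinate, and let $\lambda P:=R+\mu Q$; this is pointwise $\ge R+R^-=R^+\ge 0$ together with a non-negative correction, and has row-sum $c+\mu=\lambda$, so $P\in S(N,K)$. The resulting $R=\lambda P-\mu Q$ has cost $\lambda+\mu=c+2\mu=\|R\|_\infty$.

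I do not foresee any real obstacle; the only thing to dispose of is the degenerate case $\mu=0$, which forces $R=R^+\ge 0$ with constant row-sum $c\ge 0$, treated separately by $R=cP$ with $P:=R/c\in S(N,K)$ (or $R\equiv 0$, which is trivial).
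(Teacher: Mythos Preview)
Your proof is correct and follows essentially the same approach as the paper: both obtain the easy inequality by the triangle inequality, and for the converse both build an explicit decomposition $R=\lambda P-\mu Q$ with $\lambda=\max_x\sum_a R^+(x|a)$ and $\mu=\max_x\sum_a R^-(x|a)$, using that $\lambda-\mu=c$ forces $\lambda+\mu=\|R\|_\infty$, and absorbing the row defects into a single distinguished output coordinate. The only cosmetic differences are that the paper normalises to $\|R\|_\infty\le 1$ first and fixes the correction coordinate to be $a=K$, and that you separate out the degenerate case $\mu=0$ explicitly (you might also note the symmetric case $\lambda=0$, which your construction handles silently).
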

\begin{proof}
Since it is clear that for every $P\in S(N,K)$, we have $\max_{x=1,\cdots, N}\sum_{a=1}^K|P(x|a)|=1$, then it is obvious, by triangle inequality, that $\max_{x=1,\cdots, N}\sum_{a=1}^K|R(x|a)|\leq ||R||_{NSG(N,K)}$ for every $R\in NSG(N,K)$.

In order to show the converse inequality, let us consider an element in $R\in NSG(N,K)$ such that $\max_{x=1,\cdots, N}\sum_{a=1}^K|R(x|a)|\leq 1$ and we will show that $||R||_{NSG(N,K)}\leq 1$. The proof for general elements follows trivially by re-normalizing them.

Let us denote, for a fixed $x$, $$A_x^+=\{a:R(x|a)\geq 0\}\hspace{0.3 cm}\text{  and   }\hspace{0.3 cm}A_x^-=\{a:R(x|a)< 0\},$$and$$M=\max_x\sum_{a\in A_x^+}R(x|a)\hspace{0.3 cm}\text{  and   }\hspace{0.3 cm}m=\max_x\Big|\sum_{a\in A_x^-}R(x|a)\Big|.$$

The fact that $\sum_aR(x|a)=K$ for every $x$ guarantees that the previous $\max$ and $\min$ are attained in the same $x$. In particular, note that $M-m=K$ and $M+m=\max_x\sum_a|R(x|a)|.$
Therefore, we can write $R=MP_1- mP_2$, where we define for each $x$:
\begin{align*}
P_1(x|a)=\begin{cases}
      \frac{R(x|a)}{M} & \text{if } R(x|a)\geq 0 \text{ and } 1\leq a\leq K-1 \\
      0 & \text{if } R(x|a)< 0 \text{ and }  1\leq a\leq K-1\\
      1-\sum_{a=1}^{K-1}P_1(x|a) & \text{if } a=K
   \end{cases}
\end{align*}
\begin{align*}
P_2(x|a)=\begin{cases}
      -\frac{R(x|a)}{m} & \text{if } R(x|a)< 0 \text{ and } 1\leq a\leq K-1 \\
      0 & \text{if } R(x|a)\geq 0 \text{ and }  1\leq a\leq K-1\\
      1-\sum_{a=1}^{K-1}P_2(x|a) & \text{if } a=K
   \end{cases}
\end{align*}
Since $P_1,P_2\in S(N,K)$ we conclude that $\|R\|_{NSG}\leq 1$ and we finish the proof.
\end{proof}
\begin{remark}\label{Norm-complementation}
Using the notation from Section \ref{Sec:Preliminaries}, the previous lemma says that the identity map $id:NSG\rightarrow \ell_\infty^N(\ell_1^K)$ is an isometry. However, the fact that the projective tensor norm is not injective means that $id:NSG\otimes_\pi NSG\rightarrow \ell_\infty^N(\ell_1^K)\otimes_\pi \ell_\infty^N(\ell_1^K)$ does not need to be an isometry anymore. This is the main reason to introduce the space $(\ell_{\infty}^N(\ell_1^{K-1})\oplus_\infty\mathbb{R})$ in Theorem \ref{Thm JP}.
\end{remark}

In the following we make a similar construction of a normed space based upon the non-signalling distributions. This space, called $\mathcal{ANS}$, has the property that its elements fulfill conditions (\ref{ns1}) and (\ref{ns2}) and also that $B_{\mathcal{ANS}}=co(\mathcal{NS}\cup-\mathcal{NS})$, as it will be shown later. At the end of this section we use both of these spaces (the classical and the non-signalling), to prove Theorem \ref{boundlv}. This space was already defined at the introduction, but for the convenience of the reader we recall here its definition.

\begin{definition}\label{BNS}
Let $\mathcal{ANS}$ consist of the elements $R\in \R^{N^2K^2}$ for which there exist $\{Q(y,b)\}_{y,b}\in\R^{NK}$, $\{P(x,a)\}_{x,a}\in\R^{NK}$ and a constant $z\in \mathbb R$ such that $\sum_aR(x,y,a,b)=Q(y,b)$ for all $x,b,y$, $\sum_bR(x,y,a,b)=P(x,a)$ for all $y,b,x$ and $\sum_{a,b}R(x,y,a,b)=z$ for all $x, y$. 

We consider $\mathcal{ANS}$ endowed with the restriction to it of the  non-signalling norm. 
\end{definition}

We will need some notation. Given $R=(R(x,y,a,b))_{x,y=1, a,b=1}^{N,K}\in \mathbb R^{N^2K^2}$ we define

$$R^+(x,y,a,b)=\begin{cases} 
      R(x,y,a,b) & \text{if }R(x,y,a,b)\geq 0,\\
     0 & \text{otherwise. }
   \end{cases}$$
$$R^-(x,y,a,b)=\begin{cases}  
      R(x,y,a,b) & \text{if }R(x,y,a,b)< 0, \\
     0 & \text{otherwise. }
   \end{cases}$$
Clearly  $R=R^++R^-$

We will use the following notation for fixed $x,a$ and $y,b$ respectively:
\begin{align*} 
     c_{xa}&=\max_y\sum_b\abs{R(x,y,a,b)}=\sum_b\abs{R(x,y_{xa},a,b)}, \\
     c_{xa}^{\pm}&=\sum_b\abs{R^{\pm}(x,y_{xa},a,b)},\\
     d_{yb}&=\max_x\sum_a\abs{R(x,y,a,b)}=\sum_b\abs{R(x_{yb},y,a,b)},\\
     d_{yb}^{\pm}&=\sum_b\abs{R^{\pm}(x_{yb},y,a,b)}.
\end{align*}

It is straightforward to check that for every $x,y,a,b$ one has the following equalities:
\begin{align*}
c_{xa}&=c_{xa}^++c_{xa}^-,\\
P(x,a)&=c_{xa}^+-c_{xa}^-,\\
d_{yb}&=d_{yb}^++d_{yb},\\
Q(y,b)&=d_{yb}^+-d_{yb}^-,\\
z&=\sum_a c_{x a}^+-\sum_a c_{xa}^-=\sum_b d_{yb}^+-\sum_b d_{yb}^-.
\end{align*}

We will need the following two lemmas. 
\begin{lemma}\label{lemma2}
If $R\in \mathcal{ANS}$, then 
$$\|R\|_{\text{NS}}=\|R^+\|_{\text{NS}}+\|R^-\|_{\text{NS}}.$$

Moreover, if $\|R\|_{\text{NS}}=\sum_{a,b}\abs{R(x_0,y_a,a,b)}$, then $\|R^+\|_{\text{NS}}=\sum_{a,b}|R^+(x_0,y_a,a,b)|$ and $\|R^-\|_{\text{NS}}=\sum_{a,b}\abs{R^-(x_0,y_a,a,b)}$. An analogous statement holds if $\|R\|_{\text{NS}}=\sum_{a,b}\abs{R(x_b,y_0,a,b)}$.
\end{lemma}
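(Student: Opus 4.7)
The inequality $\|R\|_{\text{NS}} \leq \|R^+\|_{\text{NS}} + \|R^-\|_{\text{NS}}$ is immediate from the triangle inequality, so the entire content of the lemma lies in the reverse bound together with the identification of the optimal witnesses. My plan is to derive both by a short algebraic manipulation that exploits the $\mathcal{ANS}$ condition along each coordinate in turn.

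The key observation is that on $\mathcal{ANS}$ the positive and negative parts of $R$ behave rigidly in the $y$-variable: writing $R^+=\tfrac12(|R|+R)$ and $-R^-=\tfrac12(|R|-R)$ and using that $\sum_b R(x,y,a,b)=P(x,a)$ is independent of $y$, one gets
\begin{align*}
\textstyle\sum_b R^+(x,y,a,b) &= \tfrac{1}{2}\bigl(\textstyle\sum_b |R(x,y,a,b)| + P(x,a)\bigr),\\
\textstyle\sum_b |R^-(x,y,a,b)| &= \tfrac{1}{2}\bigl(\textstyle\sum_b |R(x,y,a,b)| - P(x,a)\bigr).
\end{align*}
Both right-hand sides are affine in $\sum_b |R(x,y,a,b)|$ with positive slope, so all three maxima over $y$ are attained simultaneously at $y=y_{xa}$, with values $c_{xa}^+$ and $c_{xa}^-$ respectively. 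Hence $\max_y\sum_b R^+(x,y,a,b)=c_{xa}^+$ and $\max_y\sum_b |R^-(x,y,a,b)|=c_{xa}^-$.

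Next I would sum over $a$ and invoke the second $\mathcal{ANS}$ identity $\sum_a P(x,a)=z$ to get $\sum_a c_{xa}^{\pm}=\tfrac12(\sum_a c_{xa}\pm z)$; since these are again affine in $\sum_a c_{xa}$ with positive slope, the same $x_0$ maximizes all three expressions, which yields $\|R^{\pm}\|_1=\tfrac12(\|R\|_1\pm z)$. The symmetric computation along the other index, using $\sum_a R(x,y,a,b)=Q(y,b)$ independent of $x$ and $\sum_b Q(y,b)=z$, produces $\|R^{\pm}\|_2=\tfrac12(\|R\|_2\pm z)$.

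To conclude, assume without loss of generality that $\|R\|_{\text{NS}}=\|R\|_1\geq\|R\|_2$ (the other case is identical). Since $|z|\leq\|R\|_{\text{NS}}$, the quantities $\|R^{\pm}\|_1=\tfrac12(\|R\|_1\pm z)$ are non-negative and dominate $\|R^{\pm}\|_2=\tfrac12(\|R\|_2\pm z)$, so $\|R^{\pm}\|_{\text{NS}}=\|R^{\pm}\|_1$; summing then gives $\|R^+\|_{\text{NS}}+\|R^-\|_{\text{NS}}=\|R\|_1=\|R\|_{\text{NS}}$. For the \emph{moreover} part, once $x_0$ and $y_a=y_{x_0,a}$ are fixed as optimal for $\|R\|_1$, the identities above give $\sum_{a,b}|R^{\pm}(x_0,y_a,a,b)|=\sum_a c_{x_0,a}^{\pm}=\|R^{\pm}\|_1=\|R^{\pm}\|_{\text{NS}}$, as required; the analogous statement for $\|R\|_2$ attainment is obtained by reversing the roles of the two indices. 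I do not expect any serious obstacle; the main subtlety is recognising that the two $\mathcal{ANS}$ constraints force the optimizers for $R$, $R^+$ and $R^-$ to align first in $y$ and then in $x$.
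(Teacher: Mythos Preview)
Your argument is correct and follows essentially the same three-step strategy as the paper: first aligning the $y$-maximizers for $R$, $R^+$, $R^-$ via the constraint $\sum_b R=P(x,a)$, then the $x$-maximizers via $\sum_a P(x,a)=z$, and finally comparing $\|\cdot\|_1$ with $\|\cdot\|_2$. Your use of the identity $R^{\pm}=\tfrac12(|R|\pm R)$ packages the paper's ``add and subtract an equality and an inequality'' trick into a single affine observation, which makes the presentation cleaner but is mathematically the same device.
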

\begin{proof}
Consider an element $R\in\mathcal{ANS}$ from Definition \ref{BNS} with its notation. Suppose that $\|R\|_1=\sum_{a,b}R(x_0,y_a,a,b)$, $\|R\|_2=\sum_{a,b}R(x_b,y_0,a,b)$ and also assume, without loss of generality, that $\|R\|_1\geq\|R\|_2$. Using the notation introduce above, we have in addition, 
\begin{align*}
\|R\|_{\text{NS}}&=\sum_a c_{x_0 a}^++\sum_a c_{x_0a}^-.
\end{align*}

We are going to divide the proof in three steps. First, we note  that $$\max_y\sum_b\abs{R^+(x,y,a,b)}=c_{xa}^+ \hspace{0.3 cm}\text{    and   }\hspace{0.3 cm}\max_y\sum_b\abs{ R^-(x,y,a,b)}=c_{xa}^-.$$

To see this, recall that it follows from adding or subtracting the next equality and inequality, which hold for every $1\leq y \leq N$:
$$
\begin{rcases}
c_{xa}^+-c_{xa}^-=P(x,a)=\sum_b \abs{R^+(x,y,a,b)}-\sum_b \abs{R^-(x,y,a,b)} \\
c_{xa}^++c_{xa}^-=c_{xa}\geq\sum_b \abs{R^+(x,y,a,b)}+\sum_b \abs{R^-(x,y,a,b)}
\end{rcases}
$$

Similarly, the same result is obtained for $d_{yb}^+$ and $d_{yb}^-$.

In the second step we prove that $$\max_x\sum_a c_{xa}^+=\sum_a c_{x_0a}^+ \hspace{0.3 cm}\text{    and   }\hspace{0.3 cm}\max_x\sum_a c_{xa}^-=\sum_a c_{x_0a}^-.$$ Again, this follows from adding and subtracting the next equality and inequality, both of which clearly hold $1\leq x \leq N$. 
$$
\begin{rcases}
\sum_a c_{x_0a}^+-\sum_a c_{x_0a}^-=z=\sum_{a} c_{xa}^+-\sum_{a} c_{xa}^- \\
\sum_a c_{x_0a}^++\sum_a c_{x_0a}^-=\|R\|_1\geq \sum_{a} c_{xa}^++\sum_{a} c_{xa}^-
\end{rcases}
$$

Similarly one proves that  $\max_y\sum_b d_{yb}^+=\sum_b d_{y_0b}$ and $\max_y\sum_b d_{yb}^-=\sum_b d_{y_0b}$ using $\|R\|_2$ instead of $\|R\|_1$.

The third step consists on showing that actually $$\|R^+\|_{\text{NS}}=\sum_a c_{x_0 a}^+\hspace{0.3 cm}\text{    and   }\hspace{0.3 cm}\|R^-\|_{\text{NS}}=\sum_a c_{x_0 a}^-.$$ To do this, note that the next equalities and inequalities clearly hold for every $1\leq y \leq N$:
$$
\begin{rcases}
\sum_a c_{x_0a}^+-\sum_a c_{x_0a}^-=z=\sum_{b} d_{yb}^+-\sum_{b} d_{yb}^- \\
\sum_a c_{x_0a}^++\sum_a c_{x_0a}^-=\|R\|_\text{NS}=\|R\|_1\geq\|R\|_2= \sum_{a} d_{yb}^++\sum_{a} d_{yb}^-
\end{rcases}
$$

This shows that  $\sum_a c_{x_0a}^+\geq\sum_{b} d_{yb}^+$ and $\sum_a c_{x_0a}^-\geq\sum_{b} d_{yb}^-$ for all $y$, which finishes the proof. 
\end{proof}
\begin{remark}\label{remarkdif}
Using Lemma \ref{lemma2} and its notation, if $R\in \mathcal{ANS}$, then it follows that $$\|R^+\|_\text{NS}-\|R^-\|_\text{NS}=\sum_{ab}|R^+(x_0,y_a,a,b)|-\sum_{ab}\abs{R^-(x_0,y_a,a,b)}=\sum_{ab}R(x_0,y_a,a,b)=z.$$
\end{remark}

The following lemma is an adapted version of \cite[Claim 1]{ito}. The proof is analogous and for completeness it will be given in full detail. Recall that given a set $\mathcal{A}\subset\mathbb{R}^M$ with $M\in\mathbb{N}$ we can define $r\mathcal{A}=\{ra\hspace{0.2 cm}\text{such that} \hspace{0.2 cm} a\in\mathcal{A}\}$ for $r\in\mathbb{R}^+$.

\begin{lemma}\label{lemmaito}
Given $P=(P(a,b,x,y))_{a,b,x,y}\in \mathbb R^{N^2K^2}$ with non-negative entries, suppose that there exist $(Q_1(x,a))_{x,a}$ and $(Q_2(y,b))_{y,b}$ such that $\sum_{a}P(a,b,x,y)\leq Q_2(y,b)$ for all $x,y,b$, $\sum_bP(a,b,x,y)\leq Q_1(x,a)$ for all $x,y,a$ and $\sum_a Q_1(x,a)=\sum_b Q_2(y,b)=\|P\|_\text{NS}$ for all $x,y$, then there exists $\tilde{P}\in \|P\|_\text{NS}\mathcal{NS}$ such that $P(a,b,x,y)\leq\tilde{P}(a,b,x,y)$ for all $x,y,a,b$.
\end{lemma}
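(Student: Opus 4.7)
The plan is to normalize the problem and then build $\tilde{P}$ one pair of inputs at a time by "completing" the defective marginals with a product correction. By homogeneity I may assume $\|P\|_{\text{NS}}=1$, so that the hypothesis makes $Q_1(x,\cdot)$ and $Q_2(y,\cdot)$ honest probability distributions on $\{1,\dots,K\}$ for every $x$ and $y$. The goal then becomes: exhibit, for every $(x,y)$, a probability distribution $\tilde{P}(\cdot,\cdot,x,y)$ dominating $P(\cdot,\cdot,x,y)$ whose $a$-marginal is exactly $Q_1(x,\cdot)$ and whose $b$-marginal is exactly $Q_2(y,\cdot)$; since these marginals depend only on $(x,a)$ and $(y,b)$ respectively, the resulting $\tilde{P}$ will automatically satisfy (\ref{ns1})--(\ref{ns2}) and thus lie in $\mathcal{NS}$.

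For each fixed pair $(x,y)$ I would define the marginal deficits
\begin{align*}
\delta_1(x,y,a) &= Q_1(x,a)-\sum_b P(a,b,x,y), \\
\delta_2(x,y,b) &= Q_2(y,b)-\sum_a P(a,b,x,y),
\end{align*}
both non-negative by hypothesis. Summing either one gives the same quantity
$$s(x,y) = 1-\sum_{a,b}P(a,b,x,y)\ge 0,$$
since $\sum_a Q_1(x,a)=\sum_b Q_2(y,b)=1$ and (applying the second hypothesis and summing over $a$) $\sum_{a,b}P(a,b,x,y)\le\sum_a Q_1(x,a)=1$. So the two deficit vectors $\delta_1(x,y,\cdot)$ and $\delta_2(x,y,\cdot)$ have a common total mass $s(x,y)$.

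The construction is then
$$\tilde{P}(a,b,x,y) = P(a,b,x,y) + \frac{\delta_1(x,y,a)\,\delta_2(x,y,b)}{s(x,y)}$$
when $s(x,y)>0$, and $\tilde{P}(\cdot,\cdot,x,y)=P(\cdot,\cdot,x,y)$ when $s(x,y)=0$ (in that case $P(\cdot,\cdot,x,y)$ is already a probability distribution with the correct marginals). A direct check shows $\tilde{P}\ge P\ge 0$; summing over $b$ yields $\sum_b\tilde{P}(a,b,x,y)=\sum_b P(a,b,x,y)+\delta_1(x,y,a)=Q_1(x,a)$, independent of $y$, and the symmetric computation gives $\sum_a\tilde{P}(a,b,x,y)=Q_2(y,b)$, independent of $x$. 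Summing once more gives total mass $1$, so $\tilde{P}\in\mathcal{NS}$. Undoing the normalization produces the required element of $\|P\|_{\text{NS}}\,\mathcal{NS}$.

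There is essentially no conceptual obstacle here: the whole argument is a coupling construction that uses the product distribution to fill in the missing mass, reminiscent of the classical Hall-type completions used in Ito's original claim. The only point requiring slight care is to split cases on whether $s(x,y)$ vanishes so as to avoid dividing by zero, and to verify that the chosen extension is non-negative, both of which are immediate.
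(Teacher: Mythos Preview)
Your proof is correct and follows essentially the same route as the paper: after a harmless normalization, you introduce the same marginal deficits (the paper calls them $s_{xy}(a)$, $t_{xy}(b)$ and $u_{xy}$) and use the identical product correction $P+\delta_1\delta_2/s$ to complete $P$ to an element with the prescribed marginals $Q_1$, $Q_2$. The case split on whether the total deficit vanishes, and the verification of the marginals, are also handled in the same way.
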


\begin{proof}
Defining $u_{xy}=\|P\|_\text{NS}-\sum_{a,b}P(a,b,x,y)$, $t_{xy}(b)=Q_2(y,b)-\sum_a P(a,b,x,y)$ and $s_{xy}(a)=Q_1(x,a)-\sum_bP(a,b,x,y)$ we can construct $\tilde{P}\in\|P\|_\text{NS}\mathcal{NS}$ using:

$$
\tilde{P}(a,b,x,y)=\begin{cases} 
     P(a,b,x,y)+\frac{s_{xy}(a)t_{xy}(b)}{u_{xy}} & \text{if }u_{xy}>0, \\
     P(a,b,x,y) & \text{if }u_{xy}=0.
\end{cases}
$$

To show that $\sum_{a}\tilde{P}(x,y,a,b)=Q_2(y,b)$, consider first the case $u_{xy}\neq0$:

\begin{align*}
\sum_a \tilde{P}(a,b,x,y)&=\sum_a P(a,b,x,y)+\frac{(\sum_a s_{xy}(a))t_{xy}(b)}{u_{xy}}\\
&=\sum_a P(a,b,x,y)+\frac{(\|P\|_\text{NS}-\sum_{ab}P(x,y,a,b))t_{xy}(b)}{\|P\|_\text{NS}-\sum_{ab}P(x,y,a,b)}\\
&=\sum_a P(a,b,x,y)+t_{xy}(b)\\
&=\sum_a P(a,b,x,y)+Q_2(y,b)-\sum_a P(a,b,x,y)=Q_2 (y,b).
\end{align*}

On the other side, the case $u_{xy}=0$ (which implies $\sum_{ab}P(a,b,x,y)=\|P\|_\text{NS}$) is incompatible with having $\sum_a P(a,b,x,y)<Q_2(y,b)$, because this last implies $\sum_{ab}P(a,b,x,y)<\sum_b Q_2(y,b)=\|P\|_\text{NS}$. Hence, in this case we also have $\sum_{a}\tilde{P}(x,y,a,b)=Q_2(y,b)$.

It can be seen analogously that $\sum_b \tilde{P}(a,b,x,y)=Q_1(x,a)$. Moreover $\tilde{P}$ has the property $\sum_{ab}\tilde{P}(a,b,x,y)=\sum_b Q_2(y,b)=\|P\|_\text{NS}$ for all $x,y$.
\end{proof}

\begin{proof}[Proof of Theorem \ref{NS description}]
Take $R\in B_{\mathcal{ANS}}$. We are aiming to obtain $\tilde{R}^+$ from $R^+$ and $\tilde{R}^-$ from $R^-$ in such a way that $\tilde{R}^\pm\in\|R^\pm\|_\text{NS}\mathcal{NS}$  and $R^++R^-=\tilde{R}^+-\tilde{R}^-$.

In that case we would have:
$$R=R^++R^-=\tilde{R}^+-\tilde{R}^-=\|R\|_\text{NS}\cdot\frac{\tilde{R}^+}{\|R^+\|_\text{NS}}-\|R^-\|_\text{NS}\cdot\frac{\tilde{R}^-}{\|R^-\|_\text{NS}}$$

Since $\tilde{R}^+/\|R^+\|_\text{NS}\in\mathcal{NS}$ and $\tilde{R}^-/\|R^-\|_{\text{NS}}\in\mathcal{NS}$, and also by Lemma \ref{lemma2}, $\|R^+\|_\text{NS}+\|R^-\|_\text{NS}=\|R\|_\text{NS}\leq1$, we would conclude that $R\in co(\mathcal{NS}\cup -\mathcal{NS})$.

Using the definitions of $c_{xa}^\pm$ and $d_{yb}^\pm$ from Lemma \ref{lemma2}, let for all $x$:

$$
Q_1^{\pm}(x,a)=\begin{cases} 
     c_{xa}^{\pm} & \text{if }a=1,\ldots,K-1, \\
     \|R^\pm\|_\text{NS}-\sum_{a=1}^{K-1}c_{xa}^\pm & \text{if }a=K.
\end{cases}
$$

$$
Q_2^{\pm}(y,b)=\begin{cases} 
     d_{yb}^{\pm} & \text{if }b=1,\ldots,K-1, \\
     \|R^\pm\|_\text{NS}-\sum_{a=1}^{K-1}d_{yb}^\pm & \text{if }b=K.
\end{cases}
$$

Let us show that $Q_1^\pm$ and $Q_2^\pm$ fulfill conditions of Lemma \ref{lemmaito}. For $Q_1^+$ the justification is the following (and for the rest it can be proven similarly): On the one hand, it is clear that $Q_1^{\pm}(x,a)\geq 1$ for every $x,a$, and $$\sum_{a=1}^KQ_1^{\pm}(x,a)=\|R^\pm\|_\text{NS}.$$

On the other hand, for a fixed $x$,
\begin{align*}
\sum_b R^+(x,y,a,b)&\leq \sup_y\sum_b R^+(x,y,a,b)=c_{xa}^+=Q_1^+(x,a) \hspace{0.3 cm} \text{for all } \hspace{0.3 cm} a=1,\cdots, K-1,\\
\sum_b R^+(x,y,K,b)&\leq c_{xK}^+\leq \|R^+\|_\text{NS}-\sum_{a=1}^{K-1} c_{xa}^+=Q_1^+(x,K).
\end{align*}

Since we can see analogously that $\sum_a R^+(x,y,a,b)\leq Q_2^+(y,b)$ for every $x, y,b$, we can apply Lemma \ref{lemmaito} to $R^+$ using $Q_1^+$ and $Q_2^+$ to obtain $\tilde{R}^+\in\|R^+\|_\text{NS}\mathcal{NS}$ and such that $R^+(x,y,a,b)\leq \tilde{R}^+(x,y,a,b)$ for every $x,y,a,b$. Moreover, we can show analogously that Lemma \ref{lemmaito} can be applied to $|R^-|$ using $Q_1^-$ and $Q_2^-$ to obtain $\tilde{R}^-\in\|R^-\|_\text{NS}\mathcal{NS}$ and such that $-R^-(x,y,a,b)=|R^-(x,y,a,b)|\leq \tilde{R}^-(x,y,a,b)$ for every $x,y,a,b$\footnote{Note that Lemma \ref{lemmaito} applies on non-negative tensors, so we must use it on $-R^-=|R^{-}|$.}.

We still have to prove that $R=\tilde{R}^+-\tilde{R}^-$. Note that in the construction of $\tilde{R}^+$ and $\tilde{R}^-$ using Lemma \ref{lemmaito} one defines:

\begin{align*}
s_{xy}^{\pm}(a)&=c_{xa}^{\pm}-\sum_{b=1}^{K}|R^{\pm}(x,y,a,b)|\text{ for $a=1,\ldots,K-1$},\\
s_{xy}^{\pm}(K)&=\|R^{\pm}\|_\text{NS}-\sum_{a=1}^{K-1}c_{xa}^{\pm}-\sum_{b=1}^{K}|R^{\pm}(x,y,K,b)|,\\
t_{xy}^{\pm}(b)&=d_{yb}^{\pm}-\sum_{a=1}^{K}|R^{\pm}(x,y,a,b)|\text{ for $b=1,\ldots,K-1$},\\
t_{xy}^{\pm}(K)&=\|R^{\pm}\|_\text{NS}-\sum_{b=1}^{K-1}d_{yb}^{\pm}-\sum_{a=1}^{K}|R^{\pm(x,y,a,K)|},\\
u_{xy}^{\pm}&=\|R^{\pm}\|_\text{NS}-\sum_{a,b=1}^K |R^{\pm}(x,y,a,b)|.
\end{align*}

In order to obtain: 

$$\tilde{R}^+(x,y,a,b)=\begin{cases} 
     R^+(x,y,a,b)+\frac{s_{xy}^+(a)t_{xy}^+(b)}{u_{xy}^+} & \text{if }u_{xy}^+>0 \\
     R^+(x,y,a,b) & \text{if }u_{xy}^+=0
     \end{cases}
     $$
     
$$
\tilde{R}^-(x,y,a,b)=\begin{cases} 
     |R^-(x,y,a,b)|+\frac{s_{xy}^-(a)t_{xy}^-(b)}{u_{xy}^-} & \text{if }u_{xy}^->0 \\
     |R^-(x,y,a,b)| & \text{if }u_{xy}^-=0
\end{cases}
$$

In order to show $R=\tilde{R}^+-\tilde{R}^-$ we will prove that $s_{xy}^+(a)=s_{xy}^-(a)$, $t_{xy}^+(b)=t_{xy}^-(b)$ and $u_{xy}^+=u_{xy}^-$ for all $x,y,a,b$, from where the result follows straightforwardly.

On the one hand, Remark \ref{remarkdif} guarantees that
\begin{align*}
&u_{xy}^+=u_{xy}^-
\Leftrightarrow \|R^+\|_\text{NS}-\sum_{a,b=1}^K|R^+(x,y,a,b)|=\|R^-\|_\text{NS}-\sum_{a,b=1}^K \abs{R^-(x,y,a,b)}\\
&\Leftrightarrow \|R^+\|_\text{NS}-\|R^-\|_\text{NS}=\sum_{a,b=1}^K R^+(x,y,a,b)+\sum_{a,b=1}^K\abs{R^-(x,y,a,b)}=\sum_{a,b=1}^K R(x,y,a,b)=z.
\end{align*}
 
On the other hand, for all $a=1,\cdots, K-1$,
\begin{align*}
&s_{xy}^+(a)=s_{xy}^-(a)\\
&\Leftrightarrow c_{xa}^+-\sum_{b=1}^{K}|R^+(x,y,a,b)|=c_{xa}^--\sum_{b=1}^{K}\abs{R^-(x,y,a,b)}\\
&\Leftrightarrow c_{xa}^+-c_{xa}^-=\sum_{b=1}^{K}R^+(x,y,a,b)+\sum_{b=1}^{K}R^-(x,y,a,b)=\sum_{b=1}^{K}R(x,y,a,b)=P(x,a),
\end{align*}which follows from the comments right before Lemma \ref{lemma2}.

For the case $a=K$, we can write
\begin{align*}
&s_{xy}^+(K)=s_{xy}^-(K)\\
&\Leftrightarrow \|R^+\|_\text{NS}-\sum_{a=1}^{K-1}c_{xa}^+-\sum_{b=1}^{K}|R^+(x,y,K,b)|=\|R^-\|_\text{NS}-\sum_{a=1}^{K-1}c_{xa}^--\sum_{b=1}^{K}\abs{R^-(x,y,K,b)}\\
&\Leftrightarrow\|R^+\|_\text{NS}-\|R^-\|_\text{NS}=\sum_{a=1}^{K-1}\big(c_{xa}^+-c_{xa}^-\big)+\sum_{b=1}^{K}R(x,y,K,b)=\sum_{a,b=1}^KR(x,y,a,b)=z.
\end{align*}

Finally, using the same arguments, replacing $a$ with $b$, $x$ with $y$ and $c_{xa}^{\pm}$ with $d_{yb}^{\pm}$, one can show that $t_{xy}^+(b)=t_{xy}^-(b)$ for all $b=1,\cdots, K$.
\end{proof}

\begin{theorem}\label{banachmazurdistance}
The Banach-Mazur distance between $\mathcal{ANS}$ and the space $$BNS_{N,K-1}\oplus_\infty \ell_\infty^N(\ell_1^{K-1}) \oplus_\infty \ell_\infty^N(\ell_1^{K-1})\oplus_\infty \mathbb{R}$$ is upper bounded by 9.
\end{theorem}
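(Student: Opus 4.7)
The plan is to exhibit an explicit linear isomorphism $T$ between $\mathcal{ANS}$ and the space $BNS_{N,K-1}\oplus_\infty \ell_\infty^N(\ell_1^{K-1}) \oplus_\infty \ell_\infty^N(\ell_1^{K-1})\oplus_\infty \mathbb{R}$, and to verify that $\|T\|\leq 1$ and $\|T^{-1}\|\leq 9$. The key observation is that any $R\in\mathcal{ANS}$ is uniquely determined by four pieces of data: the ``interior block'' $R_0=(R(x,y,a,b))_{x,y;\,a,b<K}$, the truncated marginals $P_0=(P(x,a))_{x;\,a<K}$ and $Q_0=(Q(y,b))_{y;\,b<K}$, and the constant $z$. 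The remaining entries of $R$ are reconstructed via
\[R(x,y,a,K)=P_0(x,a)-\sum_{b<K} R_0(x,y,a,b),\quad R(x,y,K,b)=Q_0(y,b)-\sum_{a<K} R_0(x,y,a,b),\]
\[R(x,y,K,K)=z-\sum_{a<K}P_0(x,a)-\sum_{b<K}Q_0(y,b)+\sum_{a,b<K}R_0(x,y,a,b).\]
A direct verification shows that the tensor produced this way always lies in $\mathcal{ANS}$, so the assignment $T(R)=(R_0,P_0,Q_0,z)$ is a linear bijection.

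The bound $\|T\|\leq 1$ is straightforward. Restricting the sums and maxima in $\|\cdot\|_{\text{NS}}$ to the indices $a,b<K$ yields $\|R_0\|_{BNS_{N,K-1}}\leq\|R\|_{\text{NS}}$. Since $P(x,a)=\sum_b R(x,y,a,b)$ for every $y$, we have $|P(x,a)|\leq\max_y\sum_b|R(x,y,a,b)|$; summing over $a<K$ and maximizing over $x$ gives $\|P_0\|_{\ell_\infty^N(\ell_1^{K-1})}\leq\|R\|_1\leq\|R\|_{\text{NS}}$, and a symmetric argument handles $Q_0$. Finally, $|z|\leq\sum_{a,b}|R(x,y,a,b)|\leq\|R\|_1\leq\|R\|_{\text{NS}}$. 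Hence every coordinate of $T(R)$ is dominated by $\|R\|_{\text{NS}}$, so $\|T(R)\|\leq\|R\|_{\text{NS}}$.

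The bulk of the work, and the main technical step, is the opposite bound $\|T^{-1}\|\leq 9$. Fix $(R_0,P_0,Q_0,z)$ of norm at most $1$ in the direct sum, and let $R$ be the reconstructed tensor. Applying the triangle inequality to the reconstruction formulas, for $a<K$ one obtains $\sum_b|R(x,y,a,b)|\leq 2\sum_{b<K}|R_0(x,y,a,b)|+|P_0(x,a)|$, while for $a=K$ the expansion of the corner entry $R(x,y,K,K)$ together with the entries $R(x,y,K,b)$ for $b<K$ yields $\sum_b|R(x,y,K,b)|\leq 2\sum_{b<K}|Q_0(y,b)|+2\sum_{a,b<K}|R_0(x,y,a,b)|+|z|+\sum_{a<K}|P_0(x,a)|$. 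Taking $\max_y$, summing over $a$, taking $\max_x$, and using that $\sum_{a,b<K}|R_0(x,y,a,b)|\leq\|R_0\|_{BNS_{N,K-1}}$ for every $(x,y)$, we conclude that $\|R\|_1\leq 4\|R_0\|+2\|P_0\|+2\|Q_0\|+|z|\leq 9$, and a symmetric computation gives $\|R\|_2\leq 9$. The only delicate point is bookkeeping the multiplicities with which each piece of data appears in $\|R\|_1$; the worst contribution is the factor $4$ coming from the interior block $R_0$, which accumulates both from the corner entries $R(x,y,a,K)$ and from the reconstructed row $R(x,y,K,b)$.
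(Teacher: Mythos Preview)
Your proof is correct and follows essentially the same approach as the paper. Both use the identical isomorphism $T(R)=(R_0,P_0,Q_0,z)$ with the same reconstruction formulas for $T^{-1}$, and both verify $\|T\|\leq 1$ in the same way. For the bound $\|T^{-1}\|\leq 9$, the paper writes $T^{-1}=\alpha_1+\alpha_2+\alpha_3+\alpha_4$ (one map per data block) and shows $\|\alpha_1\|\leq 4$, $\|\alpha_2\|,\|\alpha_3\|\leq 2$, $\|\alpha_4\|\leq 1$; your direct entry-by-entry estimate of $\|R\|_1$ recovers exactly the same $4+2+2+1$ breakdown, so the two arguments differ only in packaging.
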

\begin{proof}
Define the map T as: 

\begin{align*}
T:\mathcal{ANS}&\rightarrow BNS_{N,K-1}\oplus_\infty \ell_\infty^N(\ell_1^{K-1}) \oplus_\infty \ell_\infty^N(\ell_1^{K-1})\oplus_\infty \mathbb{R}\\
R=&\left\{R(x,y,a,b)\right\}_{a,b=1}^K{}_{x,y=1}^N\\
&\rightarrow \Bigg(\left\{R(x,y,a,b)\right\}_{x,y=1,a,b=1}^{N,K-1}, \left\{\sum_{b=1}^KR(x,y,a,b)\right\}_{x=1,a=1}^{N,K-1},\\
&\ \ \ \ \left\{\sum_{a=1}^KR(x,y,a,b)\right\}_{y=1,b=1}^{N,K-1},\sum_{a,b=1}^KR(x,y,a,b) \Bigg)
\end{align*}

Recall that $\sum_a R(x,y,a,b)$ and also $\sum_b R(x,y,a,b)$ are well defined because $R\in\mathcal{ANS}$ and they do not depend on $x$ or $y$, respectively. Moreover, $\sum_{a,b}R(x,y,a,b)$ is constant for all $x$, $y$. Using these observations, one can easily check that the map $T$ is well defined and it is a linear map. In addition, it is easy to verify that $\|T\|\leq1$. Indeed, this can be seen by noting that the map $T$ can be written as $T=T_1+T_2+T_3+T_4$, where $T_i$ is a linear map and $\|T_i\|\leq 1$ for every $i=1,\cdots, 4$.

The inverse $T^{-1}:BNS_{N,K-1}\oplus_\infty \ell_\infty^N(\ell_1^{K-1}) \oplus_\infty \ell_\infty^N(\ell_1^{K-1}) \oplus_\infty \mathbb{R}\rightarrow \mathcal{ANS}$ of the map $T$ is defined as
\begin{align*}
&T^{-1}\Big(\left\{R(x,y,a,b)\right\}_{x,y=1,a,b=1}^{N,K-1}, \left\{P(x,a)\right\}_{x=1,a=1}^{N,K-1},\left\{Q(y,b)\right\}_{y=1,b=1}^{N,K-1},S \Big)\\&=\tiny\begin{cases} 
      R(x,y,a,b) & \text{if }1\leq a,b\leq K-1 \\
      P(x,a)-\sum_{b'=1}^{K-1}R(x,y,a,b') & \text{if } 1\leq a\leq K-1, b=K\\
      Q(y,b)-\sum_{a'=1}^{K-1}R(x,y,a',b) & \text{if } 1\leq b\leq K-1, a=K\\
      S+\sum_{a',b'=1}^{K-1}R(x,y,a',b')-\sum_{b'=1}^{K-1}Q(y,b')-\sum_{a'=1}^{K-1}P(x,a') & \text{if } a=b=K
   \end{cases}
\end{align*}

Basic linear algebra can be used to show that $T^{-1}$ is well defined; that is, $T^{-1}(R,P,Q,S)=\{O(x,y,a,b)\}_{xyab}$ is in $\mathcal{ANS}$, by showing that $$\sum_{a=1}^K O(x,y,a,b)=Q(y,b)\hspace{0.3 cm}\text{   and   }\hspace{0.3 cm}\sum_{a=1}^K O(x,y,a,K)=S-\sum_{b=1}^{K-1}Q(y,b)\hspace{0.3 cm}\text{   for all   }x,y,b,$$ similar equalities for $\sum_{b=1}^K O(x,y,a,b)$ and also that $\sum_{ab}O(x,y,a,b)=S$ for all $x,y$.

The fact that $T$ is linear is obvious. Finally, to see that $T^{-1}\circ T=id$, call $T^{-1}(T(R))=Z$ and write:
$$
\tiny
Z_{xyab}=\begin{cases} 
      R(x,y,a,b) & \text{if }1\leq a,b\leq K-1 \\
      \sum_{b'=1}^KR(x,y,a,b')-\sum_{b'=1}^{K-1}R(x,y,a,b')=R(x,y,a,K) & \text{if } 1\leq a\leq K-1, b=K\\
      \sum_{a'=1}^KR(x,y,a',b)-\sum_{a'=1}^{K-1}R(x,y,a',b)=R(x,y,K,b) & \text{if } 1\leq b\leq K-1, a=K\\
      \sum_{a',b'=1}^KR(x,y,a',b')+\sum_{a',b'=1}^{K-1}R(x,y,a',b')-\sum_{b'=1}^{K-1}\sum_{a'=1}^KR(x,y,a',b')\\
      -\sum_{a'=1}^{K-1}\sum_{b'=1}^KR(x,y,a',b')=R(x,y,K,K) & \text{if } a=b=K
   \end{cases}
$$

In order to calculate the norm of $T^{-1}$, we can consider four different applications:
\begin{enumerate}
\item[ ] $\alpha_1:BNS_{N,K-1}\rightarrow \mathcal{ANS}$
\item[ ] $\alpha_2:\ell_\infty^N(\ell_1^{K-1}) \rightarrow \mathcal{ANS}$
\item[ ] $\alpha_3:\ell_\infty^N(\ell_1^{K-1}) \rightarrow \mathcal{ANS}$
\item[ ] $\alpha_3:\mathbb{R}\rightarrow \mathcal{ANS}$
\end{enumerate}defined, respectively, by 

\begin{align*}
\alpha_1(R)(x,y,a,b)=\begin{cases}
      R(x,y,a,b) & \text{if }1\leq a,b\leq K-1 \\
      -\sum_{b'=1}^{K-1}R(x,y,a,b') & \text{if } 1\leq a\leq K-1, b=K\\
      -\sum_{a'=1}^{K-1}R(x,y,a',b) & \text{if } 1\leq b\leq K-1, a=K\\
      \sum_{a',b'=1}^{K-1}R(x,y,a',b') & \text{if } a=b=K
   \end{cases}
\end{align*}

\begin{align*}
\alpha_2(P)(x,y,a,b)=\begin{cases} 
      0 & \text{if } 1\leq b\leq K-1\\
      P(x,a) & \text{if } 1\leq a\leq K-1, b=K\\
      -\sum_{a'=1}^{K-1}P(x,a') & \text{if } a=K, b=K
   \end{cases}\end{align*}
   
   \begin{align*}
\alpha_3(Q)(x,y,a,b)=\begin{cases} 
      0 & \text{if } 1\leq a \leq K-1\\
      Q(y,b) & \text{if } 1\leq b\leq K-1, a=K\\
      -\sum_{b'=1}^{K-1}Q(y,b') & \text{if } a=K, b=K
   \end{cases}\end{align*}
   
     \begin{align*}
\alpha_4(S)(x,y,a,b)=\begin{cases} 
      S & \text{if } b=K, a=K\\
      0 & \text{otherwise }
   \end{cases}\end{align*}

One can check that these are well defined linear maps. Moreover, one can write:
$$T^{-1}(R,P,Q,S)=\alpha_1(R)+\alpha_2(P)+\alpha_3(Q)+\alpha_4(S).$$ 

Since $\|\alpha_1(R)\|_{NS}=\max\{\|\alpha_1(R)\|_1,\|\alpha_1(R)\|_2\}$, then,
\begin{align*}
\|\alpha_1(R)\|_1=&\max_x\sum_a\max_y\sum_b \abs{\alpha_1(R)(x,y,a,b)}=\sum_{a,b=1}^K\abs{\alpha_1(R)(x_0,y_a,a,b)}\\
&=\sum_{a,b=1}^{K-1}\abs{R(x_0,y_a,a,b)}+\sum_{a=1}^{K-1}\abs{\sum_{b'=1}^{K-1}R(x_0,y_a,a,b')}+\sum_{b=1}^{K-1}\abs{\sum_{a'=1}^{K-1}R(x_0,y_k,a',b)}\\
&+\abs{\sum_{a',b'=1}^{K-1}R(x_0,y_k,a',b')}\leq 4\|R\|_1.
\end{align*}

Similarly, one can show that $\|\alpha_1(R)\|_1\leq 4\|R\|_1$ making $\|\alpha_1\|\leq4$.

Using the same techniques as before the estimates $\|\alpha_2\|\leq2$, $\|\alpha_3\|\leq2$ and $\|\alpha_4\|\leq1$ can be proven, concluding that 
$$\|T^{-1}\|\leq \|\alpha_1\|+\|\alpha_2\|+\|\alpha_3\|+\|\alpha_4\|\leq 9.$$
\end{proof}

The bound given in Proposition \ref{cota4} (in terms of the number of outputs) was only valid for non-negative elements. We prove now a bound for the general case. First, we need a lemma that allows to bound the difference in norm when changing from $\ell_1^N(\ell_\infty^K(\ell_1^N)=\ell_1^N\otimes_\pi(\ell_\infty^K\otimes_\epsilon\ell_1^N)$ to $\ell_1^N(\ell_\infty^K)\otimes_\epsilon\ell_1^N=(\ell_1^N\otimes_\pi\ell_\infty^K)\otimes_\epsilon\ell_1^N$. Since the result could be of independent interest, we state it and prove it in more general context. The proof remains essentially the same. 
\begin{lemma}\label{cota2bis}
Let $X$ be a Banach space and $M\in \ell_1^N\otimes X\otimes \ell_1^L$. Then,
$$\|M\|_{\ell_1^N(X\otimes_\epsilon\ell_1^L)}\leq \sqrt{2N} \|M\|_{\ell_1^N(X)\otimes_\epsilon\ell_1^L}.$$
\end{lemma}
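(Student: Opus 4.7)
The plan is to reduce the $X$-valued inequality to its scalar counterpart, namely $\|A\|_{\ell_1^N \otimes_\pi \ell_1^L} \leq \sqrt{2N}\,\|A\|_{\ell_1^N \otimes_\epsilon \ell_1^L}$ for real $N \times L$ matrices $A$; this is the same kind of estimate as the one already quoted in the paper for square matrices. The reduction is achieved by straightforward duality in the middle factor $X$, and after that no further geometry of $X$ is needed.

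Writing $M = \sum_{i,j} e_i \otimes x_{ij} \otimes e_j$ with $x_{ij} \in X$, I first rewrite the left-hand side as
$$
\|M\|_{\ell_1^N(X \otimes_\epsilon \ell_1^L)} = \sum_{i=1}^N \Big\|\sum_j x_{ij} \otimes e_j\Big\|_{X \otimes_\epsilon \ell_1^L} = \sum_{i=1}^N \|T_i\|,
$$
where $T_i \colon \ell_\infty^L \to X$ is the operator $T_i(\ell) = \sum_j \ell_j x_{ij}$, under the standard identification of $X \otimes_\epsilon \ell_1^L$ with bounded operators from $\ell_\infty^L$ to $X$. Dualising, $\|T_i\| = \|T_i^*\| = \sup_{x^* \in B_{X^*}} \sum_j |\langle x^*, x_{ij}\rangle|$, a supremum attained at some $x_i^* \in B_{X^*}$ (by weak-$*$ compactness of $B_{X^*}$ and continuity of the objective). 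Setting $A_{ij} := \langle x_i^*, x_{ij}\rangle$ this gives
$$
\|M\|_{\ell_1^N(X \otimes_\epsilon \ell_1^L)} = \sum_{i,j} |A_{ij}| = \|A\|_{\ell_1^N \otimes_\pi \ell_1^L}.
$$

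Applying the scalar inequality to $A$ and expanding $\|A\|_{\ell_1^N \otimes_\epsilon \ell_1^L} = \sup_{\epsilon \in \{\pm 1\}^L} \sum_i \bigl|\sum_j \epsilon_j A_{ij}\bigr|$, I then use $\|x_i^*\|_{X^*} \leq 1$ to pass back to $X$: for every sign vector $\epsilon \in \{\pm 1\}^L$,
$$
\sum_i \Big|\sum_j \epsilon_j A_{ij}\Big| = \sum_i \Big|\Big\langle x_i^*, \sum_j \epsilon_j x_{ij}\Big\rangle\Big| \leq \sum_i \Big\|\sum_j \epsilon_j x_{ij}\Big\|_X \leq \|M\|_{\ell_1^N(X) \otimes_\epsilon \ell_1^L}.
$$
Taking the supremum over $\epsilon$ and chaining the displays produces the desired bound.

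The only non-routine step is the duality reduction in the middle: once each inner $\epsilon$-norm is realised as an operator norm and computed via an optimal functional $x_i^*$, the $X$-valued problem collapses in one stroke to a scalar matrix inequality whose $\sqrt{2N}$ factor is supplied by the known scalar estimate. The final pointwise-in-$\epsilon$ comparison $|\langle x_i^*, \cdot\rangle| \leq \|\cdot\|_X$ simply re-lifts the scalar right-hand side back to an upper bound by $\|M\|_{\ell_1^N(X) \otimes_\epsilon \ell_1^L}$.
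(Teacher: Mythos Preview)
Your proof is correct and is essentially the same argument as the paper's: both dualize in the middle factor $X$ via optimal functionals $x_i^*\in B_{X^*}$, then use Khintchine's inequality (which you package as the scalar estimate $\|A\|_{\ell_1^N\otimes_\pi\ell_1^L}\le\sqrt{2N}\,\|A\|_{\ell_1^N\otimes_\epsilon\ell_1^L}$, while the paper unfolds Khintchine inline), and finally re-absorb the signs to recover $\|M\|_{\ell_1^N(X)\otimes_\epsilon\ell_1^L}$. The only cosmetic point is that the scalar inequality you invoke is stated in the paper for square matrices, but the rectangular case follows by the identical Khintchine argument the paper itself carries out.
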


The proof of this lemma is a consequence of Khintchine inequality (\cite[pag. 96]{defantfloret}), applied to $p=1$, for which we know that $a_1=\sqrt{2}$ in the following result.
\begin{theorem}[Khintchine inequality]\label{Khintchine ineq}
For $1\leq p <\infty$ there exist constants $a_p, \text{}b_p\geq 1$ such that
\begin{align*}
a_p^{-1}\left(\sum_{i=1}^N |\alpha_i|^2\right)^\frac{1}{2}\leq \left(\int_0^1\Big|\sum_{i=1}^N r_i(t)\alpha_i\Big|^p\, dt\right)^{\frac{1}{p}}  \leq b_p\left(\sum_{i=1}^N |\alpha_i|^2\right)^\frac{1}{2}
\end{align*}for every $N$ and all $\alpha_1, \cdots, \alpha_N \in \mathbb R$, where here $(r_i)_{i=1}^N$ denote the Rademacher functions.
\end{theorem}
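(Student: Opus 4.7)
The plan is to prove Khintchine's inequality by reducing everything to the $L^2$ case and then interpolating. Setting $S(t)=\sum_{i=1}^N r_i(t)\alpha_i$ and $\sigma^2=\sum_{i=1}^N\alpha_i^2$, orthonormality of the Rademacher functions in $L^2([0,1])$ gives immediately $\|S\|_2=\sigma$. From this anchor one obtains the upper bound for $p\geq 2$ via a subgaussian moment estimate, the lower bound for $p<2$ by Hölder interpolation, and the remaining cases by monotonicity of $L^p$ norms on the probability space $[0,1]$.

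For the upper bound when $p\geq 2$, I would exploit the subgaussian nature of $S$. By independence of the $r_i$, for every $\lambda\in\mathbb R$,
\[
\mathbb{E}[e^{\lambda S}]=\prod_{i=1}^N\cosh(\lambda\alpha_i)\leq\prod_{i=1}^N e^{\lambda^2\alpha_i^2/2}=e^{\lambda^2\sigma^2/2},
\]
where $\cosh(t)\leq e^{t^2/2}$ is immediate from comparing Taylor coefficients. A standard Chernoff argument then yields $\mathbb{P}(|S|\geq t)\leq 2e^{-t^2/(2\sigma^2)}$, and integrating $\mathbb{E}|S|^p=p\int_0^\infty t^{p-1}\mathbb{P}(|S|\geq t)\,dt$ against this tail gives $\|S\|_p\leq b_p\sigma$ with an explicit $b_p$ of order $\sqrt{p}$. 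For $1\leq p\leq 2$, monotonicity of $L^p$ norms on $[0,1]$ gives $\|S\|_p\leq\|S\|_2=\sigma$, so $b_p=1$ suffices.

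For the lower bound when $p<2$, I would pick any $q>2$ (for concreteness $q=4$) and $\theta\in(0,1)$ satisfying $\tfrac{1}{2}=\tfrac{\theta}{p}+\tfrac{1-\theta}{q}$. Hölder's inequality applied to $|S|^2=|S|^{2\theta}|S|^{2(1-\theta)}$ gives $\|S\|_2\leq\|S\|_p^{\theta}\,\|S\|_q^{1-\theta}$. Combining with the upper bound $\|S\|_q\leq b_q\sigma$ just established and the identity $\|S\|_2=\sigma$, one solves for $\|S\|_p$ to obtain $\|S\|_p\geq a_p^{-1}\sigma$ with $a_p=b_q^{(1-\theta)/\theta}$. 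For $p\geq 2$, the trivial monotonicity $\|S\|_p\geq\|S\|_2=\sigma$ gives $a_p=1$.

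The main obstacle — and the one relevant for the application in Lemma \ref{cota2bis} through the value $p=1$ — is that the generic Chernoff-plus-Hölder scheme above only produces \emph{some} finite $a_1$, not the sharp constant $a_1=\sqrt{2}$. Obtaining this optimal value is a theorem of Szarek, and can alternatively be extracted from Haagerup's explicit computation of the even moments $\mathbb{E}|S|^{2k}$; either argument requires a combinatorial or analytic identification of the extremal configuration (morally, the case $\alpha_1=\cdots=\alpha_N$ as $N\to\infty$, where $S/\sigma$ is approximately standard Gaussian), which is substantially more delicate than the interpolation argument above. For the present paper it would be natural to state the sharp value and cite Szarek rather than reprove it.
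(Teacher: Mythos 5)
The paper does not prove this statement: it cites Khintchine's inequality directly from Defant--Floret (and, for the specific constant $a_1=\sqrt 2$, implicitly relies on the Szarek/Haagerup theorem, which is listed in the bibliography). Your proposal supplies a self-contained proof of the \emph{existence} of the constants, and it is correct as far as it goes: the orthonormality anchor $\|S\|_2=\sigma$, the subgaussian bound $\mathbb{E}[e^{\lambda S}]\le e^{\lambda^2\sigma^2/2}$ via $\cosh(t)\le e^{t^2/2}$, the Chernoff tail and moment integral for $p\ge 2$, $L^p$-monotonicity on $[0,1]$ for the easy directions, and the log-convexity (Hölder) interpolation $\|S\|_2\le\|S\|_p^\theta\|S\|_q^{1-\theta}$ for the lower bound with $p<2$ are all standard and correctly carried out. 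Your self-assessment is also accurate: this route gives, e.g., $a_1\le b_4^2\le\sqrt 3$ (using $\mathbb{E}|S|^4\le 3\sigma^4$), not the sharp $a_1=\sqrt 2$ that the paper quotes and uses explicitly in Lemma \ref{cota2bis}. That gap is harmless for the paper's headline bounds, which are only asserted up to universal constants (Proposition \ref{cota3bis}, Theorem \ref{boundlv}), but to reproduce the stated $\sqrt{2N}$ in Lemma \ref{cota2bis} one must either cite Szarek's sharp constant, as you suggest, or carry through a slightly worse numerical constant. In short: your argument proves a correct (and for this paper, sufficient) weakening of the cited result; the sharp value would require a genuinely different and more delicate argument and is best cited rather than reproved.
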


\begin{proof}
Let $M=\sum_{i=1}^N\sum_{j=1}^L e_i\otimes M_{i,j}\otimes e_j$, with $M_{i,j}\in X$. Then, we have
\begin{align*} 
\|M\|_{\ell_1^N(X\otimes_\epsilon\ell_1^L)}&=\sum_{i=1}^N\Big \| \sum_{j=1}^L M_{i,j}\otimes e_j\Big\|_{X\otimes_\epsilon \ell_1^L}\\&\numeq{1}\sum_{i=1}^N \sup_{x_i^*\in B_{X^*}}\sum_{j=1}^L |x_i^*(M_{i,j})|\\&=\sup_{(x_1^*,\cdots ,x_N^*)\in B_{\ell_\infty^N(X^*)}}\sum_{i=1}^N \sum_{j=1}^L |x_i^*(M_{i,j})|\\&\numineq{2} \sqrt{N}\sup_{(x_1^*,\cdots ,x_N^*)\in B_{\ell_\infty^N(X^*)}}\sum_{j=1}^L \Big(\sum_{i=1}^N |x_i^*(M_{i,j})|^2\Big)^{\frac{1}{2}}\\&\numineq{3} \sqrt{2N}\sup_{(x_1^*,\cdots ,x_N^*)\in B_{\ell_\infty^N(X^*)}}\sum_{j=1}^L \int_0^1 \Big| \sum_{i=1}^N  r_i(t) x_i^*(M_{i,j})\Big| dt\\& =\sqrt{2N}\sup_{(x_1^*,\cdots ,x_N^*)\in B_{\ell_\infty^N(X^*)}}\int_0^1 \sum_{j=1}^L \Big| \sum_{i=1}^N  r_i(t) x_i^*(M_{i,j})\Big| dt\\& \leq \sqrt{2N}\sup_{(x_1^*,\cdots ,x_N^*)\in B_{\ell_\infty^N(X^*)}} \sup_{t\in [0,1]} \sum_{j=1}^L\Big| \sum_{i=1}^N  r_i(t) x_i^*(M_{i,j})\Big|\\&\leq \sqrt{2N}\sup_{(x_1^*,\cdots ,x_N^*)\in B_{\ell_\infty^N(X^*)}} \sup_{(t_1,\cdots, t_N)\in \{-1,1\}^N}\sum_{j=1}^L  \Big|\sum_{i=1}^N  t_ix_i^*(M_{i,j})\Big|\\&\numeq{4} \sqrt{2N}\sup_{(x_1^*,\cdots ,x_N^*)\in B_{\ell_\infty^N(X^*)}} \sum_{j=1}^L\Big| \sum_{i=1}^N  x_i^*(M_{i,j})\Big|\\&\numeq{5}\sqrt{2N}\|M\|_{\ell_1^N(X)\otimes_\epsilon\ell_1^L}.
\end{align*}

Here, $\numeq{1}$ follows from the definition of the $\epsilon$ norm, $\numineq{2}$ follows from the fact that $\|id:\ell_2^N\rightarrow \ell_1^N\|\leq \sqrt{N}$, $\numineq{3}$ follows from Khintchine inequality, $\numeq{4}$ is clear since $\|t_ix_i^*\|=\|x_i^*\|$ for $t_i=\pm 1$ and $\numeq{5}$ follows again from the definition of the $\epsilon$ norm.
\end{proof}

Using this result, we can bound the difference between the norms in $\ell_1^N(\ell_\infty^K)\otimes_\epsilon\ell_1^N(\ell_\infty^K)$ and $\ell_1^N(\ell_\infty^K(\ell_1^N(\ell_\infty^K))$.

\begin{proposition}\label{cota3bis}
There exists a universal constant $C$ independent of $N,K$ such that, 
given $M\in \mathbb {R}^{N^2 K^2}$, one has 
$$
\|M\|_{\ell_1^N(\ell_\infty^K(\ell_1^N(\ell_\infty^K)))}\leq C\sqrt{NK} \|M\|_{\ell_1^N(\ell_\infty^K)\otimes_\epsilon\ell_1^N(\ell_\infty^K)}.$$
\end{proposition}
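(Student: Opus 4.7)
The plan is to view the left-hand side in the form $\sum_x \|z_x\|_Y$ with $Y := \ell_1^N(\ell_\infty^K)$ and suitably chosen $z_x \in Y$, and then use a vector-valued Khintchine-type inequality for $Y$ to reduce to the right-hand side. Unfolding the iterated norm first by $\ell_1^N$ and then by $\ell_\infty^K$ gives
\[
\|M\|_{\ell_1^N(\ell_\infty^K(\ell_1^N(\ell_\infty^K)))} = \sum_x \max_a \|M_{x,a}\|_Y,
\]
where $M_{x,a} = M(x,a,\cdot,\cdot)$. Picking, for each $x$, an index $a(x)$ attaining the maximum and setting $z_x := M_{x,a(x)}$ puts the left-hand side in the desired form. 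On the other hand, since $Y^* = \ell_\infty^N(\ell_1^K)$, duality gives
\[
\|M\|_{\ell_1^N(\ell_\infty^K)\otimes_\epsilon\ell_1^N(\ell_\infty^K)} \;=\; \sup_{u \in B_{\ell_\infty^N(\ell_1^K)}}\Big\|\sum_{x,a}u(x,a)M_{x,a}\Big\|_Y.
\]

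The heart of the argument is the following variant of the proof of Lemma~\ref{cota2bis}. Combining Cauchy--Schwarz over $x$ with the Rademacher estimate $(\sum_x\|z_x\|_Y^2)^{1/2} \leq \sqrt{2K}\,\mathbb{E}_{\vr}\|\sum_x\vr_xz_x\|_Y$, valid for any $z_x \in Y$ and i.i.d. Rademacher $(\vr_x)$, one obtains
\[
\sum_x\|z_x\|_Y \;\leq\; \sqrt{N}\Big(\sum_x\|z_x\|_Y^2\Big)^{1/2} \;\leq\; \sqrt{2NK}\;\mathbb{E}_{\vr}\Big\|\sum_x\vr_xz_x\Big\|_Y.
\]
I would prove the Rademacher inequality itself in two substeps: first, Minkowski's inequality (triangle inequality for $\ell_2^N$) applied inside the $\ell_1^N$-sum over $y$ hidden in $\|\cdot\|_Y$ yields
\[
\Big(\sum_x\|z_x\|_Y^2\Big)^{1/2} \;\leq\; \sum_y\Big(\sum_x\|z_x(y,\cdot)\|_{\ell_\infty^K}^2\Big)^{1/2};
\]
then, for each fixed $y$, the cotype-$2$ estimate $(\sum_x\|v_x\|_{\ell_\infty^K}^2)^{1/2} \leq \sqrt{2K}\,\mathbb{E}_{\vr}\|\sum_x\vr_xv_x\|_{\ell_\infty^K}$ follows by applying scalar Khintchine (Theorem~\ref{Khintchine ineq}) separately to each coordinate $b$ and using the trivial bound $\sum_b|\cdot|^2 \leq K\max_b|\cdot|^2$. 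Summing over $y$ and pulling the expectation outside produces the claimed factor $\sqrt{2K}$.

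To close the loop, observe that $\sum_x\vr_xz_x = \sum_{x,a}u_{\vr}(x,a)M_{x,a}$, where $u_{\vr}(x,a) := \vr_x\mathbf{1}_{\{a=a(x)\}}$ clearly satisfies $\|u_{\vr}\|_{\ell_\infty^N(\ell_1^K)} \leq 1$. Hence each realization of the randomized sum has $Y$-norm bounded by the right-hand side of the Proposition, and taking expectation finishes the argument with $C = \sqrt{2}$.

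The main obstacle is obtaining the sharp factor $\sqrt{K}$ (and not $\sqrt{NK}$) in the Rademacher estimate above. A naive application of scalar Khintchine coordinate-by-coordinate in $(y,b)$ would spread the cotype-$2$ loss over both the $\ell_1^N$ in $y$ and the $\ell_\infty^K$ in $b$, producing a spurious extra $\sqrt{N}$; the Minkowski swap between the outer $\ell_2^N$ over $x$ and the $\ell_1^N$ over $y$ is exactly what confines the loss to the $\ell_\infty^K$ direction.
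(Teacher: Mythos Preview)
Your proof is correct and follows a genuinely different route from the paper's. The paper factors the identity through an isomorphism $T:\ell_\infty^K\to\ell_1^K$ with $\|T\|\|T^{-1}\|\leq C\sqrt{K}$ (the Banach--Mazur distance estimate), which collapses the innermost $\ell_1^N(\ell_\infty^K)$ to $\ell_1^{NK}$, then invokes the general Lemma~\ref{cota2bis} with $X=\ell_\infty^K$ to gain the factor $\sqrt{2N}$, and finally undoes $T$. In contrast, you bypass the Banach--Mazur input entirely by proving directly a cotype-$2$ inequality for $Y=\ell_1^N(\ell_\infty^K)$ with constant $\sqrt{2K}$: the Minkowski swap pushes the $\ell_2$ over $x$ inside the $\ell_1$ over $y$, and then scalar Khintchine in each coordinate $b$ handles the remaining $\ell_\infty^K$. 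Your approach is more elementary and yields the explicit constant $C=\sqrt{2}$, while the paper's argument is more modular (Lemma~\ref{cota2bis} is stated for arbitrary $X$) but leans on the non-trivial distance $d(\ell_1^K,\ell_\infty^K)\leq C\sqrt{K}$. One minor remark: in your cotype step for $\ell_\infty^K$ you in fact use \emph{both} trivial bounds, first $\max_b\leq\sum_b$ to enter the coordinates and then $\sum_b\leq K\max_b$ after Khintchine; it may be worth saying so explicitly.
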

\begin{proof}
The Banach-Mazur distance between $\ell_\infty^K$ and $\ell_1^K$ is $d(\ell_\infty^K,\ell_1^K)\leq C\sqrt{K}$, with $C$ certain constant independent of the dimension \cite[Proposition 37.6]{tomczak-jaegermann}. This means that there exists an isomorphism $T:\ell_\infty^K\rightarrow \ell_1^K$ such that $\|T\|\|T^{-1}\|\leq C\sqrt{K}$. We will use the metric mapping property of the $\pi$ \cite[pag. 27]{defantfloret} and the $\epsilon$ \cite[pag. 46]{defantfloret} norm,  which says that for all linear maps $T:X\rightarrow W$, $S:Y\rightarrow Z$, we have $$\|T\otimes S:X\otimes_\alpha Y\rightarrow W\otimes_\alpha Z\|=\|T\|\|S\|\hspace{0.2 cm}\text{   for }\hspace{0.2 cm}\alpha=\pi, \epsilon.$$ In particular, if we consider a normed space $X$ and the mapping $id\otimes T: X\otimes_\pi \ell_\infty^K \longrightarrow X\otimes_\pi \ell_1^K$,  then, for every $M\in X\otimes \ell_\infty^K$ one has $\|(id\otimes T)(M)\|_{X\otimes_\pi \ell_1^K}\leq \|T\|\|M\|_{X\otimes_\pi \ell_\infty^K}$. Similar statements hold if we replace $T$ by $T^{-1}$. 

Let $M\in\mathbb{R}^{N^2 K^2}$. The reasonings above, together with  Lemma \ref{cota2bis} replacing the space $X$ in the lemma by $\ell_\infty^K$ yield the following: 
\begin{align*}
\|M\|_{\ell_1^N(\ell_\infty^K(\ell_1^N(\ell_\infty^K)))}&\leq \|T\| \|M\|_{\ell_1^N(\ell_\infty^K(\ell_1^{NK})}\leq \sqrt{2N} \|T\| \|M\|_{\ell_1^N(\ell_\infty^K)\otimes_\epsilon \ell_1^{NK}} \\
&\leq \sqrt{2N} \|T\| \|T^{-1}\|\|M\|_{\ell_1^N(\ell_\infty^K)\otimes_\epsilon \ell_1^{N}(\ell_\infty^K)}\\&\leq 
C'\sqrt{NK} \|\|M\|_{\ell_1^N(\ell_\infty^K)\otimes_\epsilon \ell_1^{N}(\ell_\infty^K)}.
\end{align*}
\end{proof}

\begin{remark}\label{upper bound BNS}
A dual statement of Proposition \ref{cota3bis} is that $$\|id:\ell_\infty^N(\ell_1^K(\ell_\infty^N(\ell_1^K)))\rightarrow \ell_\infty^N(\ell_1^K)\otimes_\pi\ell_\infty^N(\ell_1^K)\|\leq C\sqrt{NK}.$$ Moreover, a dual statement of Proposition \ref{cota1} is $$\|id:\ell_\infty^N(\ell_1^K(\ell_\infty^N(\ell_1^K)))\rightarrow \ell_\infty^N(\ell_1^K)\otimes_\pi\ell_\infty^N(\ell_1^K)\|\leq N.$$
In particular, this trivially implies that $$\|id:BNS_{NK}\rightarrow \ell_\infty^N(\ell_1^{K-1})\otimes_\pi\ell_\infty^N(\ell_1^{K-1})\|\leq C\min\{N, \sqrt{NK}\},$$where the space $BNS_{NK}$ was defined right after Definition \ref{nsnorm}.
\end{remark}

Now we have all the tools to prove Theorem \ref{boundlv}.

\begin{proof}[Proof of Theorem \ref{boundlv}]
As we have said in the introduction, $LV_{\mathcal{NS}}$ is the smallest constant such that $$\widetilde{\mathcal NS}\subseteq LV_{\mathcal{NS}}\cdot \widetilde{\mathcal L},$$where $\widetilde{\mathcal A}=co\big(\mathcal A\cup -\mathcal A\big)$. 

Since the equalities $B_{\mathcal{ANS}}=\widetilde{\mathcal NS}$ and  $B_{NSG\otimes_\pi NSG}=\widetilde{\mathcal L}$ are known from Theorem \ref{NS description} and Theorem \ref{Thm JP}, the statement of the theorem is equivalent to prove that
\begin{align*}
\|id:\mathcal{ANS}\rightarrow NSG\otimes_\pi NSG\|\leq C\text{min}\{N, \sqrt{NK}\},
\end{align*}where $C$ is a universal constant.

Let us define the Banach spaces:
\begin{align*}
&X=BNS_{NK}\oplus_\infty \ell_\infty^N(\ell_1^{K-1})\oplus_\infty \ell_\infty^N(\ell_1^{K-1})\oplus_\infty \R,\\&
Y=\big( \ell_\infty^N(\ell_1^{K-1})\otimes_\pi  \ell_\infty^N(\ell_1^{K-1})\big)\oplus_\infty \ell_\infty^N(\ell_1^{K-1})\oplus_\infty \ell_\infty^N(\ell_1^{K-1})\oplus_\infty \R.
\end{align*}

We will decompose the identity map between $\mathcal{ANS}$ and $NSG\otimes_\pi NSG$ as $$T^{-1}\circ id\circ T:\mathcal{ANS}\rightarrow X\rightarrow Y\rightarrow NSG\otimes_\pi NSG,$$where $T$ is the map used in the proof of Theorem \ref{banachmazurdistance}. Now, in that theorem we showed that $$\|T:\mathcal{ANS}\rightarrow X\|\leq 1.$$Moreover, a direct consequence of Remark \ref{upper bound BNS} is that $$\|id:X\rightarrow Y\|\leq C\min\{N, \sqrt{NK}\}.$$

Hence, we have that $$\|id:\mathcal{ANS}\rightarrow NSG\otimes_\pi NSG\|\leq C\text{min}\{N, \sqrt{NK}\}\|T^{-1}:Y\rightarrow NSG\otimes_\pi NSG\|$$and the theorem will follow from the estimate $$\|T^{-1}:Y\rightarrow NSG\otimes_\pi NSG\|\leq 9.$$

To see this last bound, we proceed as in the proof of Theorem \ref{banachmazurdistance} by decomposing the map $T^{-1}=\alpha_1+\alpha_2+\alpha_3+\alpha_4$ and upper bounding each of the norms. Let us first consider $$\alpha_1:\ell_\infty^N(\ell_1^{K-1})\otimes_\pi  \ell_\infty^N(\ell_1^{K-1})\rightarrow NSG\otimes_\pi NSG.$$Now, in order to upper bound the norm of this map, it suffices to consider elements of the form $R=(P_1(a,x)P_2(b,y))_{x,y;a,b=1}^{N,K-1}$ such that $\|P_1\|_{\ell_\infty^N(\ell_1^{K-1})}\leq 1$ and $\|P_2\|_{\ell_\infty^N(\ell_1^{K-1})}\leq 1$. It is easy to see that $$\alpha_1(R)=\big(Q_1(a,x)Q_2(b,y)\big)_{x,y,a,b=1}^{N,K},$$where for every $x,y$,
$$
Q_1(a|x)=\begin{cases} 
      P_1(a,x) & \text{if } 1\leq a\leq K-1\\
      -\sum_{a'=1}^{K-1}P_1(a,x) & \text{if } a=K.
   \end{cases},$$$$
Q_2(b|y)=\begin{cases} 
      P_2(b,y) & \text{if } 1\leq b\leq K-1\\
      -\sum_{b'=1}^{K-1}P_2(b,y) & \text{if } b=K.
   \end{cases}.$$
   
 Hence, for these particular elements, it is clear that $$\|\alpha_1(R)\|_{NSG\otimes_\pi NSG}=\|Q_1\|_{NSG}\|Q_2\|_{NSG}=\|Q_1\|_{\ell_\infty^N(\ell_1^K)}\|Q_2\|_{\ell_\infty^N(\ell_1^K)},$$where in the last equality we have used Lemma \ref{NSG-L1} \footnote{Note that, as we mentioned in Remark \ref{Norm-complementation}, in general we cannot replace $\|\alpha_1(R)\|_{NSG\otimes_\pi NSG}$ by $\|\alpha_1(R)\|_{\ell_\infty^N(\ell_1^K)\otimes_\pi \ell_\infty^N(\ell_1^K)}$. However, for the particular elements of the form $Q_1\otimes Q_2$, both norms coincide by Lemma \ref{NSG-L1}.}. Now, it is very easy to check that  $\|Q_1\|_{\ell_\infty^N(\ell_1^K)}\leq 2$ and $\|Q_2\|_{\ell_\infty^N(\ell_1^K)}\leq 2$, from where we conclude that $\|\alpha_1\|\leq 4$.

Let us consider now $$\alpha_2:\ell_\infty^N(\ell_1^{K-1}) \rightarrow NSG\otimes_\pi NSG.$$Given $P\in \ell_\infty^N(\ell_1^{K-1})$ with $\|P\|_{\ell_\infty^N(\ell_1^{K-1})}\leq 1$, it can be easily checked that 
$$\alpha_2(P)=\big(Q_1(a,x)Q_2(b,y)\big)_{x,y,a,b=1}^{N,K},$$where for every $x,y$,
\begin{align*}
&Q_1(a|x)=\begin{cases} 
      P(a,x) & \text{if } 1\leq a\leq K-1\\
      -\sum_{a'=1}^{K-1}P_1(a',x) & \text{if } a=K
   \end{cases},\\
&Q_2(b|y)=\begin{cases} 
      0 & \text{if } 1\leq b\leq K-1\\
      1 & \text{if } b=K
   \end{cases}.\end{align*}
   
 As in the case if $\alpha_1$, we can deduce that $\|\alpha_2(P)\|_{NSG\otimes_\pi NSG}=\|Q_1\|_{\ell_\infty^N(\ell_1^K)}\|Q_2\|_{\ell_\infty^N(\ell_1^K)}\leq 2$, so that $\|\alpha_2\|\leq 2$. Moreover, the case of $\alpha_3$ can be analyzed exactly in the same way to deduce $\|\alpha_3\|\leq 2$. 
 
 Finally,  for the case of $\alpha_4:\mathbb R \rightarrow NSG\otimes_\pi NSG$, one can check that for a given $|s|\leq 1$, we have $\alpha_4(s)=\big(Q_1(a,x)Q_2(b,y)\big)_{x,y,a,b=1}^{N,K}$, where ,for every $x,y$,
$$
Q_1(a|x)=\begin{cases} 
      0 & \text{if } 1\leq a\leq K-1\\
      s & \text{if } a=K.
   \end{cases},$$$$
Q_2(b|y)=\begin{cases} 
      0 & \text{if } 1\leq b\leq K-1\\
      1 & \text{if } b=K.
   \end{cases},$$and on trivially deduces that $\|\alpha_4\|\leq 1$.
   
  Since $\|T^{-1}\|\leq \|\alpha_1\|+\|\alpha_2\|+\|\alpha_3\|+\|\alpha_4\|\leq 9$, we conlude the proof.
\end{proof}

\begin{remark}
For this case it can also be seen that using the same techniques, when we distinguish the inputs and outputs for Alice and Bob as $N_1$, $N_2$, $K_1$ and $K_2$, the following bounds can be obtained: 
$$LV_{\mathcal{NS}}(M)\leq\mathcal{O}(\min\{N_1,N_2,\sqrt{CN_1K_2},\sqrt{CN_2K_1}\})$$
\end{remark}

\section*{acknowledgment}
This research was funded by the Spanish MINECO through Grant No. MTM2017-88385-P and by the Comunidad de Madrid through grant QUITEMAD-CM P2018/TCS4342. We also acknowledge funding from SEV-2015-0554-16-3 and ``Ram\'on y Cajal program'' RYC-2012-10449 (C. P.).

\end{document}